\newcommand{\version}{October 17, 2017}
\def\@tocline#1#2#3#4#5#6#7{\relax
  \ifnum #1>\c@tocdepth 
  \else
    \par \addpenalty\@secpenalty\addvspace{#2}%
    \begingroup \hyphenpenalty\@M
    \@ifempty{#4}{%
      \@tempdima\csname r@tocindent\number#1\endcsname\relax
    }{%
      \@tempdima#4\relax
    }%
    \parindent\z@ \leftskip#3\relax \advance\leftskip\@tempdima\relax
    \rightskip\@pnumwidth plus4em \parfillskip-\@pnumwidth
    #5\leavevmode\hskip-\@tempdima
      \ifcase #1
       \or\or \hskip 1em \or \hskip 2em \else \hskip 3em \fi%
      #6\nobreak\relax
      \dotfill
      \hbox to\@pnumwidth{\@tocpagenum{#7}}
    \par
    \nobreak
    \endgroup
  \fi}
\newcommand{\bdm}{\begin{displaymath}}
\newcommand{\edm}{\end{displaymath}}
\newcommand{\bdn}{\begin{eqnarray}}
\newcommand{\edn}{\end{eqnarray}}
\newcommand{\bay}{\begin{array}{c}}
\newcommand{\eay}{\end{array}}
\newcommand{\ben}{\begin{enumerate}}
\newcommand{\een}{\end{enumerate}}
\newcommand{\R}{\mathbb{R}}
\newcommand{\eps}{\varepsilon}
\newcommand{\half}{\hbox{$\frac12$}}
\newcommand{\intR}{\int_{\R ^2}}
\newcommand{\PsiLau}{\Psi_{\rm Lau}}
\newcommand{\cLau}{c _{\rm Lau}}
\newtheorem{theorem}{Theorem}[section]
\newtheorem{lemma}[theorem]{Lemma}
\newtheorem{asumption}[theorem]{Asumption}
\newtheorem{corollary}[theorem]{Corollary}
\newtheorem{proposition}[theorem]{Proposition}
\theoremstyle{definition}
\theoremstyle{remark}
\newcommand{\beq}{\begin{equation}}
\newcommand{\eeq}{\end{equation}}
\newcommand{\ebt}{E ^{\mathrm{bt}}}
\newcommand{\rhobt}{\rho ^{\mathrm{bt}}}
\newcommand{\norm}[1]{\left\lVert #1 \right\rVert}
\newcommand{\Eel}{E^{\rm el}}
\newcommand{\cEel}{\mathcal{E}^{\rm el}}
\newcommand{\rhoel}{\rho^{\rm el}}
\newcommand{\rhoMF}{\rho^{\rm MF}}
\newcommand{\EMF}{F^{\rm MF}}
\newcommand{\cEMF}{\mathcal{F}^{\rm MF}}
\newcommand{\Eelt}{\tilde{E}^{\rm el}}
\newcommand{\cEelt}{\tilde{\mathcal{E}}^{\rm el}}
\newcommand{\rhoelt}{\tilde{\rho}^{\rm el}}
\newcommand{\chiin}{\chi_{\mathrm{in}}}
\newcommand{\chiout}{\chi_{\mathrm{out}}}
\numberwithin{equation}{section}
\begin{document}

\title{The Laughlin liquid in an external potential}

\author{Nicolas Rougerie}
\address{Universit\'e Grenoble Alpes \& CNRS, LPMMC (UMR 5493), B.P. 166, F-38042 Grenoble, France}
\email{nicolas.rougerie@lpmmc.cnrs.fr}


\author{Jakob Yngvason}
\address{Faculty of Physics, University of Vienna, Boltzmanngasse 5, A-1090 Vienna, Austria}
\email{jakob.yngvason@univie.ac.at}



\date{\version}

\begin{abstract}
We study natural perturbations of the Laughlin state arising from the effects of trapping and disorder.  These are $N$-particle wave functions that have the form of a product of Laughlin states and analytic functions of the $N$ variables. We derive an upper bound to the ground state energy in a confining external potential, matching exactly a recently derived lower bound in the large $N$ limit. Irrespective of the shape of the confining potential, this sharp upper bound can be achieved through a modification of the  Laughlin function by suitably arranged quasi-holes. 
\end{abstract}

\maketitle

\tableofcontents

\section{Introduction}

In the physics of the fractional quantum Hall effect (FQHE)~\cite{StoTsuGos-99, Laughlin-99, Girvin-04,Jain-07} the Laughlin wave functions~\cite{Laughlin-83,Laughlin-87} play a pivotal role. Using complex coordinates $z_i$, $i=1,\dots, N$ for the positions of the particles in two-dimensional space and taking the magnetic length to be $1/\sqrt 2$, the function with filling factor $1/\ell$, where $\ell$ is a positive integer, is defined as
\beq\label{laufunc}
\PsiLau(z_1,\dots,z_N) =\cLau \prod_{i<j}(z_i-z_j)^{\ell} e^{-\sum_{i=1}^N|z_i|^2/2}
\eeq
with a normalization constant $\cLau$. For fermions $\ell$ takes odd values $\geq 3$ ($\ell=1$ corresponds to noninteracting particles) while for bosons even values are required. 

The state~\eqref{laufunc} was introduced as an ansatz for the ground state of the many-body quantum mechanical Hamiltonian (in complex notation)
\begin{equation}\label{eq:mag hamil}
H_N^{\rm QM} = \sum_{j=1} ^N \left( - 4 \partial_{z_j}\partial_{\bar{z}_j} - 2 z_j\partial_{z_j} + 2 \bar{z}_j\partial_{\bar{z}_j} + |z_j| ^2 + V (z_j) \right)+ \sum_{1\leq i < j \leq N} w(z_i-z_j) 
\end{equation}
acting on $L ^2 (\R ^{2N}) $, the Hilbert space for $N$ particles living in 2D. The one-particle part of the above contains the magnetic Laplacian $- 4 \partial_{z}\partial_{\bar{z}} - 2 z\partial_{z} + 2 \bar{z}\partial_{\bar{z}} + |z| ^2$ with constant magnetic field\footnote{We have chosen units so that the strength of the magnetic field is $2$ and the length unit is the magnetic length, $1/\sqrt 2$. Also, $\hbar=1$ and the mass is $\half$.} perpendicular (pointing downwards) to the plane where the particles move. In the ansatz~\eqref{laufunc}, all particles  live in the ground eigenspace (lowest Landau level, LLL in the sequel) of this magnetic Laplacian, in order to minimize the magnetic kinetic energy.  The correlation factors $(z_i-z_j)^\ell$ are inserted to suppress repulsive interactions $w(z_i-z_j)$ between the particles. 

The most general wave function in the LLL retaining the correlations of~\eqref{laufunc} is
\begin{equation}\label{eq:fullcorr} 
\Psi_F (z_1,\dots, z_N)= F(z_1,\dots, z_N)\PsiLau (z_1,\dots, z_N)
\end{equation}
with $F$ analytic and symmetric under exchange of the $z_i$. If the external potential in~\eqref{eq:mag hamil} is neglected, $V=0$, and for strong repulsive interactions $w$, all states of the form \eqref{eq:fullcorr} can in first approximation be expected to minimize the Hamiltonian~\eqref{eq:mag hamil}. Since $\Psi_F$ fully resides in the LLL, this only amounts to assuming that the repulsive interactions are rendered negligible by the pair correlations included in~\eqref{laufunc}. This is even exactly fulfilled for some model interaction operators~\cite{Haldane-83,TruKiv-85,PapBer-01} where all ground states are of the form \eqref{eq:fullcorr}.

An important issue, however, is to consider the effects of trapping and disorder. When the magnetic field and the interaction set the largest energy scales of the problem, this amounts to minimizing the potential energy in an external potential $V$ within the class of wave-functions~\eqref{eq:fullcorr}. That is, we look for analytic functions $F$ which optimize the energy 
\begin{equation}\label{eq:start ener}
\left\langle \Psi_F \, \big|\, \sum_{j=1} ^N V (z_j) \,\big|\, \Psi_F \right\rangle = \intR \rho_F (z) V(z) dz 
\end{equation}
where $V:\R^2 \mapsto \R$ is the external potential modeling trapping/disorder and 
\begin{equation}\label{eq:original density}
\rho_F (z) = N \int_{\R ^{2(N-1)}}  \left| \Psi_F \left(z, z_2,\ldots,z_N \right) \right| ^2 dz_2 \ldots dz_N
\end{equation}
is the one-particle density of the wave function~\eqref{eq:fullcorr}, normalized so as to have total mass~$N$. Defining
\begin{equation}\label{eq:min ener}
 E_V (N,\ell) = \inf \left\{ \intR \rho_F (z) V(z) dz \: | \: \Psi_F \mbox{ of the form~\eqref{eq:fullcorr} }, \, \int_{\R ^{2N}} |\Psi_F| ^2 = 1 \right\}, 
\end{equation}
an educated guess, supported by the analysis in \cite{RouSerYng-13a, RouSerYng-13b, LieRouYng-16, LieRouYng-17},  is that for large particle numbers $N$ with $\ell$ fixed,
\begin{equation}\label{eq:formal result}
\boxed{E_V (N,\ell) \simeq \ebt_V (N,\ell)}  
\end{equation}
where the \emph{bathtub energy} $\ebt_V (N,\ell)$ is defined as the lowest possible energy for normalized densities satisfying the bound $0 \leq \rho \leq (\pi\ell)^{-1}$:
\begin{equation}\label{eq:bathtub intro}
\ebt_V (N,\ell) := \inf \left\{ \intR \rho (z) V(z) dz \: | \: 0 \leq \rho \leq \frac{1}{\pi \ell},\: \intR \rho = N  \right\} = \intR \rhobt_V (z) V (z) dz. 
\end{equation}
It is well-known~\cite[Theorem~1.14]{LieLos-01} that its minimizers (we denote them by $\rhobt_V$ ) are found by saturating the upper bound for the density  and filling the level sets of $V$ up to a certain level fixed by the normalization. Thus  a bathtub minimizer  is uniquely determined by its support, which has area $N(\pi\ell)$, and where it takes the constant value $(\pi\ell)^{-1}$.

\medskip

The asymptotics~\eqref{eq:formal result} relies on two complementary properties of the class of states \eqref{eq:fullcorr}:

\medskip

\noindent\textbf{(a)} The Laughlin liquid is incompressible. In particular, any wave-function~\eqref{eq:fullcorr} based on the Laughlin state~\eqref{laufunc} has its one-body density everywhere bounded above by $1/(\pi \ell)$, which is precisely the particle density for the Laughlin function itself  within the disk where it is essentially supported. Given this basic density bound it is natural to expect the lower bound  $E_V (N,\ell) \gtrapprox \ebt_V (N,\ell)$ to hold.

\medskip

\noindent\textbf{(b)} The variational set of functions~\eqref{eq:fullcorr} is sufficiently large, so that one can construct a trial state whose density distribution (asymptotically) matches that of the bathtub minimizer. This leads to the upper bound $E_V (N,\ell) \lessapprox \ebt_V (N,\ell)$.

\medskip

Property \textbf{(a)} above has been rigorously established in~\cite{LieRouYng-16,LieRouYng-17}, following previous results in~\cite{RouYng-14,RouYng-15}. It was proved that the one-body density~\eqref{eq:original density} satisfies, in a suitable average sense, a universal bound  for all $F$,
\beq\label{eq:densbound}
\rho_F (z)\leq (\pi\ell)^{-1},
\eeq
in the limit of large particle numbers\footnote{For finite $N$ the density may oscillate close to the edge of the sample and exceed the bound \eqref{eq:densbound}, cf. the numerical calculations for $N=400$ in \cite{Ciftja-06}.} $N\to\infty$. 

Since the right-hand side of~\eqref{eq:densbound} is the particle density for the Laughlin function, the bound~\eqref{eq:densbound} means that an additional factor $F$ cannot compress the density beyond this limit to take full advantage of the minima of an external potential. This highly nontrivial result is an important aspect of the rigidity of the Laughlin liquid with respect to external perturbations. It relies essentially on the analyticity of $F$ which, in turn, is due to the restriction to the LLL. It is in stark contrast with the fact that, without a strong magnetic field, the electron density in a crystal can be arbitrarily high locally due to constructive interference of Bloch waves, each of which is uniformly bounded.

Property \textbf{(b)} is the concern of the present paper. Something more precise can in fact be shown, namely it is sufficient to consider the sub-class of wave-functions 
\begin{equation}\label{eq:fullcorr qh}
\Psi_f (z_1,\ldots,z_N) = \prod_{j=1} ^N f (z_j) \PsiLau (z_1,\dots, z_N) 
\end{equation}
with $f$ a polynomial in a {\em single} variable. Denoting the corresponding one-particle density by~$\rho_f$ the  minimal energy within this class is
\begin{equation}\label{eq:min ener e}
 e_V (N,\ell) = \inf \left\{ \intR \rho_f (z) V(z) dz \: | \: \Psi_f \mbox{ of the form~\eqref{eq:fullcorr qh} }, \, \int_{\R ^{2N}} |\Psi_f| ^2 = 1 \right\}. 
\end{equation}
Clearly, 
\begin{equation}\label{eq:trivial}
 E_V (N,\ell) \leq e_V (N,\ell) 
\end{equation}
for  we have simply reduced the variational set. A function of the above form 
corresponds to inserting Laughlin quasi-holes~\cite{Laughlin-83,Laughlin-87} with locations 
at the zeros of the polynomial $f$.
Each of those carries a fraction $1/\ell$ of an electron's charge, and is expected to behave as an anyon~\cite{AroSchWil-84,LunRou-16} with statistics parameter $-1/\ell$.

In this paper we complete the proof of~\eqref{eq:formal result} by showing that 
$$ e_V (N,\ell) \lessapprox \ebt_V (N,\ell).$$
In particular,  we refine Property~\textbf{(b)} by showing that suitable states whose density asymptotically minimizes the bathtub energy can always be found among those of the form~\eqref{eq:fullcorr qh} with polynomials $f$. In other words, not only does the density of an optimizer of~\eqref{eq:start ener} always reduce to that of the bathtub problem~\eqref{eq:bathtub intro}, but also it can be approximated by inserting {\em uncorrelated} Laughlin quasi-holes on top of the Laughlin wave-function.

The remarkable fact here is that no electron/quasi-hole correlations are needed to optimize the energy. The electron/electron correlations are sufficient to deal with all physical effects of the interaction and are robust against perturbations by external potentials. 
The reason why this is remarkable is that the distribution of (a-priori correlated) quasi-holes governed by $F(z_1,\ldots,z_N)$ actually sees a complicated many-body Hamiltonian, encoded in the factor $\PsiLau$ it gets multiplied with before entering the minimization problem~\eqref{eq:start ener}.

It was one of the key guesses of Laughlin's original theory~\cite{Laughlin-83,Laughlin-87} that the response of his wave-functions to external potentials could be investigated by inserting uncorrelated quasi-particles on top of it. In this paper we provide a complete mathematical proof of this fact, the first to our knowledge.

\medskip

\noindent{\bf Acknowledgments.} We thank Elliott H. Lieb for helpful remarks. N. Rougerie received financial support from the French ANR project ANR-13-JS01-0005-01.

\section{Statements}

We now turn to the precise statements of our main results, starting with our assumptions on the external potential. The density of the Laughlin state  is essentially supported in a thermodynamically large region, namely it is a droplet of radius $\propto \sqrt{N}$, and the confinement should keep the perturbed state also in a region of this order of magnitude.  Thus,  assumptions on the potential are most conveniently stated in terms of a scaled version of an $N$-dependent $V$. The simplest way is to write 
\begin{equation}\label{eq:asum pot 1}
V (z) = U \left(\frac{z}{\sqrt N} \right)  
\end{equation}
where $U$ is a \emph{fixed} function\footnote{Generalizations of this assumption are discussed in Section 5} satisfying the following conditions:

\begin{asumption}[\textbf{The external potential}]\label{asum:pot}\mbox{}\\
The scaled potential $U$ is a fixed, twice continuously differentiable function from $\R^2$ to $\R^+$. We moreover assume that 
$$ U(x) \underset{|x| \to \infty}{\longrightarrow} +\infty$$
but with at most polynomial growth of $U$ and its gradient: There exists fixed positive numbers $s,t$ such that 
\begin{equation}\label{eq:growth pot}
 |U(x)| \leq |x| ^s \mbox{ and } |\nabla U (x)| \leq |x| ^t \mbox{ for } |x| \mbox{ large enough.} 
\end{equation}
Furthermore, we assume that $U$ has no flat pieces: The Lebesgue measure of the level set $\left\{ U = e \right\}$ is zero for any $e\in \R ^+$.
\end{asumption}

That $U$ takes positive values is just a 
convention on the energy reference, ensuring in particular that the relevant energies are $O(N)$. Moreover, we assume regularity and a reasonable trapping behavior. That the potential has no flat pieces is just a convenient technical assumption, ensuring in particular that the ground state of the bathtub problem has a unique solution. This allows a simple statement about convergence of densities (Corollary \ref{cor:density} below).
\medskip

Our main result is as follows: 
\begin{theorem}[\textbf{Potential energy of the Laughlin liquid}]\label{thm:main}\mbox{}\\
For fixed integer $\ell$ we have,  under Assumption~\ref{asum:pot},
\begin{equation}
 \label{eq:main result bis}
\lim_{N\to\infty}\frac {E_V(\ell,N)}{ e_V(\ell,N)}=\lim_{N\to\infty}\frac {E_V(\ell,N)} {E^{\rm bt}_V (\ell,N)}=1. 
\end{equation}
\end{theorem}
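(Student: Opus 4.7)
Since $E_V(N,\ell) \leq e_V(N,\ell)$ is trivial (cf.~\eqref{eq:trivial}) and the matching lower bound $E_V(N,\ell) \geq \ebt_V(N,\ell)(1-o(1))$ follows from property~(a) above and the density bound~\eqref{eq:densbound} of~\cite{LieRouYng-16, LieRouYng-17}, it suffices to prove
\begin{equation*}
e_V(N,\ell) \leq \ebt_V(N,\ell)(1+o(1)).
\end{equation*}
The plan is to construct, for each large $N$, an explicit polynomial $f$ in a single variable whose corresponding trial state $\Psi_f$ of~\eqref{eq:fullcorr qh} has a one-body density $\rho_f$ satisfying $\int \rho_f V \leq \int \rhobt_V V \, (1+o(1))$.

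The choice of $f$ is dictated by the Coulomb-plasma picture: up to a normalization constant,
\begin{equation*}
|\Psi_f(z_1,\ldots,z_N)|^2 = \exp\Bigl(-\textstyle\sum_j |z_j|^2 + 2\ell \sum_{i<j}\log|z_i-z_j| + 2\sum_{j,k}\log|z_j - a_k|\Bigr),
\end{equation*}
with $a_1,\ldots,a_K$ the zeros of $f$. This is the Boltzmann weight of a 2D one-component plasma at inverse temperature $\beta=2$, in a harmonic trap with $K$ external attractive sources. The mean-field Euler-Lagrange equation yields (formally) $\rho = (\pi\ell)^{-1} - (1/\ell)\sum_k \delta_{a_k}$ in the bulk, so that the coarse-grained equilibrium density is $(\pi\ell)^{-1}$ on the droplet, modified by a local deficit of total charge $1/\ell$ around each $a_k$. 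Setting $S := \mathrm{supp}(\rhobt_V)$ (of area $N\pi\ell$), enclosing $S$ in a disk $D$, and placing the $K := (|D|-|S|)/\pi$ zeros of $f$ on the intersection of a Bravais lattice $\latt$ of unit-cell area $\pi$ with $D\setminus S$, a direct charge count shows the putative droplet has $N$ particles, fills $D$ entirely, and has coarse density $(\pi\ell)^{-1}$ on $S$ and $0$ on $D\setminus S$, which is exactly $\rhobt_V$.

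The core of the argument is to upgrade this formal mean-field heuristic to rigorous density estimates: (i) estimate $\intR|\Psi_f|^2$ by its mean-field value using the 2D log-gas/Coulomb gas techniques developed in~\cite{RouSerYng-13a, RouSerYng-13b, LieRouYng-16}; (ii) localize $\rho_f$ on mesoscopic patches of side much larger than the magnetic length but much smaller than $\sqrt{N}$, and use the screening properties of the 2D plasma, together with~(i), to show that the averaged density tracks $\rhobt_V$ to leading order; (iii) pass to $\int\rho_f V$ using the regularity~\eqref{eq:growth pot} of $U$ and a volume estimate on the transition layer along $\partial S$, whose area is $O(\sqrt{N})$ and thus contributes only $o(N)$ to the energy. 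The no-flat-pieces hypothesis in Assumption~\ref{asum:pot} enters to control how $\rhobt_V$ responds to small perturbations of the level-set threshold defining $D$. The principal obstacle is step~(ii), which must reconcile the microscopic scale of the quasi-holes with the macroscopic scale $\sqrt{N}$ on which $\rhobt_V$ varies; the mesoscopic averaging, combined with the fact that a 2D Coulomb gas screens external charges on scales of order unity, is what makes this bridge possible.
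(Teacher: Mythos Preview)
Your construction of the trial polynomial $f$ --- placing simple zeros on a lattice of cell area $\pi$ covering $D\setminus S$ --- coincides with the paper's (Section~\ref{sec:electrostatic}, equations~\eqref{eq: fqh delta}--\eqref{eq:choose grid}), and the reduction of Theorem~\ref{thm:main} to an upper bound on $e_V$ is exactly as stated after~\eqref{eq: lowerbound}. (One slip: the external point charges are \emph{repulsive} in the plasma picture, not attractive; your subsequent formula $\rho = (\pi\ell)^{-1} - \ell^{-1}\sum_k\delta_{a_k}$ is consistent with repulsion.)

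Where you diverge is in the rigorous justification. You propose to pass directly from $|\Psi_f|^2$ to the coarse-grained density via mesoscopic localization and plasma screening. The paper instead separates the task into two cleaner pieces, both phrased in the Coulomb metric~\eqref{eq:metric}:
\begin{enumerate}
\item[(A)] An \emph{inverse electrostatic problem} at the mean-field level (Proposition~\ref{pro:inverse el}). One first solves the variational equation~\eqref{eq:var} exactly with a \emph{continuous} quasi-hole charge filling $D(0,R)\setminus\Omega_0$ at density $2/\pi$ (Lemma~\ref{lem:smeared}, via Newton's theorem), and then compares the discrete and continuous problems using the convexity/stability inequality~\eqref{eq:stability}, $\cEel_\delta[\rho_0] \geq \cEel_\delta[\rho_\delta] + 2\ell\, D(\rho_0-\rho_\delta,\rho_0-\rho_\delta)$, together with a Riemann-sum estimate~\eqref{eq:riemann} for $Q_\delta - Q_0$.
\item[(B)] A \emph{mean-field approximation} for the quantum density (Proposition~\ref{pro:MF plasma}), imported essentially verbatim from~\cite{RouSerYng-13b}, bounding $\mu_f^{(1)} - \rhoel_f$ again in the Coulomb metric.
\end{enumerate}
The Cauchy--Schwarz inequality~\eqref{eq:CS} then converts Coulomb-metric bounds into bounds on $\int(\rho_0-\mu_f^{(1)})U$, and a separate exponential decay estimate~\eqref{eq:mu-decay} handles the tails, yielding the explicit rate $O(N^{-1/4})$.

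Your screening route is not wrong in spirit, but the paper's detour through the mean-field functional buys a great deal: convexity gives the stability estimate for free, and the Coulomb metric is exactly adapted to testing against $\nabla U\in L^2_{\rm loc}$. By contrast, ``the 2D plasma screens on $O(1)$ scales'' is a heuristic that, to become a density estimate at the required precision and uniformly over the disorder of the $K\sim N$ external charges, would need substantially more machinery than your sketch indicates --- in particular a quantitative local law for the Coulomb gas with a general (non-radial) external potential, which is not available off the shelf in the references you cite.
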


As mentioned previously, the lower bound 
\beq \label{eq: lowerbound}E_V (N,\ell) \geq \ebt_V (N,\ell) (1+o(1))\eeq
is already contained in \cite[Corollary~2.3]{LieRouYng-17}.
The subject of the present paper is the corresponding upper bound, based on a trial state argument. The issue is to approximate the density of the bathtub minimizer $\rho^{\rm bt}_V$ using states of the form \eqref{eq:fullcorr qh}.

\begin{theorem}[\textbf{Reaching the bathtub energy using uncorrelated quasi-holes}]\label{thm:upbound}\mbox{}\\
Under the stated assumptions on $V$ there exists a (sequence of) polynomial(s) $f(z)$ such that, denoting by $\rho_f$ the one-particle density of the corresponding state~\eqref{eq:fullcorr qh}, we have
\begin{equation}\label{eq:ener up bound}
e_V (N,\ell) \leq  \int_{\R^2} V (z) \rho_f (z) dz \leq \ebt_V (N,\ell) \left( 1+O(N ^{-1/4}) \right) 
\end{equation}
 in the limit $N\to \infty$.
\end{theorem}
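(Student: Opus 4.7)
The trial state $\Psi_f$ will be obtained by choosing $f(z)=\prod_k(z-a_k)$ with the $a_k$'s acting as Laughlin quasi-holes that carve the Laughlin droplet into the bathtub profile $\rhobt_V$. To control this quantitatively we invoke the 2D Coulomb gas analogy:
$$
|\Psi_f(z_1,\ldots,z_N)|^2 = \mathcal Z^{-1}\exp\Bigl(2\ell\sum_{i<j}\log|z_i-z_j|+2\sum_{j,k}\log|z_j-a_k|-\sum_{j}|z_j|^2\Bigr).
$$
This is the Gibbs weight of a 2D one-component plasma with coupling $\ell$, harmonic background $|z|^2$, and static pinning charges at the $a_k$. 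Rescaling $z=\sqrt N\,x$ and taking the Laplacian of the Euler--Lagrange equation for the corresponding mean-field functional yields, in the bulk of the support of the mean-field density $\tilde\rho_\star$, the elementary identity
$$
\tilde\rho_\star(x) = \frac{1}{\pi\ell} - \frac{1}{\ell}\tilde\nu(x),
$$
where $\tilde\nu$ is the macroscopic density of quasi-hole positions in rescaled coordinates.

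\textbf{Explicit choice of quasi-holes.} Let $\mathcal T_U\subset\R^2$ denote the support of the rescaled bathtub minimizer $\tilde\rho^{\rm bt}_U$; by the density bound and normalization, $|\mathcal T_U|=\pi\ell$. Fix a disk $B$ containing both $\mathcal T_U$ and the unperturbed Laughlin droplet $\{|x|\le\sqrt\ell\}$, and place the $a_k$'s at $\sqrt N\,b_k$, where $\{b_k\}$ is a square sublattice of mesh $\sqrt\pi$ contained in $B\setminus\mathcal T_U$. The macroscopic quasi-hole density is then approximately $\tilde\nu\simeq \pi^{-1}\mathbf 1_{B\setminus\mathcal T_U}$, and the mean-field prediction from the previous step becomes exactly $\tilde\rho^{\rm bt}_U$ (supported on $\mathcal T_U$). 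Taking $B$ large enough ensures that the Euler--Lagrange variational inequality outside $\mathcal T_U$ is comfortably satisfied, so that the equilibrium support is indeed $\mathcal T_U$ rather than something larger.

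\textbf{Quantitative control and main obstacle.} The principal step is to turn the mean-field prediction into a quantitative comparison between $\rho_f$ and $\rhobt_V$. Following the free-energy scheme of \cite{RouSerYng-13a,RouSerYng-13b,LieRouYng-16,LieRouYng-17}, one derives matching upper and lower bounds on $\log\mathcal Z$ by splitting the pair Coulomb interaction into a short-range part (handled by an Onsager-type lemma on small disks around each particle) and a long-range part (handled by electrostatic/renormalized-energy estimates). The ensuing free-energy asymptotics force the empirical measure of $\Psi_f$ to be close to $\tilde\rho^{\rm bt}_U$ in a negative Sobolev/Coulombic norm, with explicit rate. Integrating the discrepancy $\rho_f-\rhobt_V$ against $V$, using the $C^2$ regularity and polynomial growth of $V$ from Assumption~\ref{asum:pot} via a cutoff and integration by parts, yields the desired estimate \eqref{eq:ener up bound}. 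The main obstacle is the boundary layer of microscopic width $O(N^{-1/2})$ along $\partial\mathcal T_U$ across which $\rho_f$ must smoothly interpolate between the flat value $(\pi\ell)^{-1}$ and $0$; the no-flat-pieces assumption on $V$ ensures that $\partial\mathcal T_U$ is a piecewise smooth curve and thus that the boundary-layer contribution is $O(N^{3/4})$, fixing the relative error $N^{-1/4}$.
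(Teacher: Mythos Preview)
Your overall strategy coincides with the paper's: plasma analogy, mean-field reduction, quasi-holes placed on a lattice covering the complement of the bathtub support inside an enclosing disk, and control of the density discrepancy in the Coulomb/negative-Sobolev metric followed by a cutoff argument. Two concrete points need correction, however.

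First, the lattice spacing is off by a decisive factor. With $f(z)=\prod_k(z-\sqrt N\,b_k)$ (simple zeros), each quasi-hole carries charge $2/N$ in the rescaled plasma Hamiltonian, so the effective quasi-hole charge density is $(2/N)\cdot(\text{mesh})^{-2}$. Matching this to the target value $2/\pi$ forces mesh $=\sqrt{\pi/N}$ in the rescaled variable $b$, not $\sqrt\pi$; with your stated mesh only $O(1)$ points fall in $B\setminus\mathcal T_U$ and the Laughlin droplet is left essentially undeformed. (In the \emph{unscaled} coordinates the $a_k$-mesh is indeed $\sqrt\pi$, so this may be a transcription slip, but as written the construction fails.)

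Second, your attribution of the $N^{-1/4}$ rate to a boundary layer of $\rho_f$ along $\partial\mathcal T_U$ is not the actual mechanism, and the $O(N^{3/4})$ figure does not follow from that heuristic. In the paper the bottleneck is the \emph{discretization} of the quasi-hole charge: one proves $D(\rho_0-\rho_\delta,\rho_0-\rho_\delta)\le CN^{-1/2}$ by comparing the electrostatic energies with continuous versus lattice quasi-hole distributions (a Riemann-sum estimate on the induced potential combined with a stability inequality coming from the variational equation), and then the Cauchy--Schwarz bound $\bigl|\int(\rho_0-\rho_\delta)\chi\bigr|\le C\,D(\rho_0-\rho_\delta,\rho_0-\rho_\delta)^{1/2}\|\nabla\chi\|_{L^2}$ loses a square root and produces $N^{-1/4}$. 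The mean-field step approximating $\mu_f^{(1)}$ by the electrostatic minimizer contributes only $O((\log N/N)^{1/2})$ and is not limiting. Finally, the no-flat-pieces hypothesis is invoked to guarantee uniqueness of the bathtub minimizer (needed for Corollary~\ref{cor:density}), not smoothness of $\partial\mathcal T_U$.
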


Theorem \ref{thm:upbound} together with \eqref{eq: lowerbound} and \eqref{eq:trivial}  proves Theorem~\ref{thm:main}. A 
particular 
instance of this result was proved previously in~\cite{RouSerYng-13b}, where we considered radial potentials only, increasing or mexican-hat-like.

\medskip

We also state and prove a corollary regarding the densities of  approximate minimizers of the original problem~\eqref{eq:start ener}. Again it is convenient to use scaled variables: Define, for a given state~\eqref{eq:fullcorr} 
\begin{equation}\label{eq:scale dens intro}
 \mu_F ^{(1)} (x) = \rho_F (\sqrt{N} x )
\end{equation}
and observe that the bathtub minimizer $\rhobt_V$ is given by 
\begin{equation}\label{eq:scale bathtub}
\rhobt_V (\sqrt{N} x)= \rhobt_U (x)
\end{equation}
where $\rhobt_U$ is the minimizer of the scaled bathtub problem:
\begin{equation}\label{eq:bathtub intro U}
\ebt_U (\ell) := \inf \left\{ \intR \rho (x) U(x) dx \: | \: 0 \leq \rho \leq \frac{1}{\pi \ell},\: \intR \rho = 1  \right\} = \intR \rhobt_U (x) U (x) dx. 
\end{equation}
Note also that $E^{\rm bt}_V(\ell)=NE^{\rm bt}_U(\ell)$. 


\begin{corollary}[\textbf{Convergence of densities}]\label{cor:density}\mbox{}\\
Let $F$ be a (sequence of) correlation factors such that the associated $\Psi_F$ of the form~\eqref{eq:fullcorr} satisfy 
\begin{equation}\label{eq:almost min}
\intR V \rho_F = E_V (N,\ell) + o (N) 
\end{equation}
in the limit $N\to \infty$. Then
\begin{equation}\label{eq:conv dens}
\mu_F ^{(1)} \rightharpoonup \rhobt_U
\end{equation}
weakly as probability measures, i.e, the integrals against  any continuous bounded function converge.
\end{corollary}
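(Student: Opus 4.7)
The plan is to extract a limit of $\mu_F^{(1)}$ along any subsequence, identify it as the unique bathtub minimizer $\rhobt_U$, and conclude by a standard subsequence argument that the whole family converges. First I would rewrite the energy in the rescaled variable $x = z/\sqrt N$: combining \eqref{eq:asum pot 1}, \eqref{eq:scale dens intro}, and the change of variables $\int \mu_F^{(1)} = 1$ yields
\begin{equation*}
\int_{\R^2} V\rho_F \,=\, N\int_{\R^2} U\,\mu_F^{(1)}.
\end{equation*}
The almost-minimality hypothesis \eqref{eq:almost min} combined with Theorem~\ref{thm:main} and the scaling identity $\ebt_V(\ell,N)=N\ebt_U(\ell)$ then gives $\int U\,\mu_F^{(1)} = \ebt_U(\ell) + o(1)$. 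Tightness of $\{\mu_F^{(1)}\}$ follows from the confining behavior of $U$: since $U(x)\to+\infty$ as $|x|\to\infty$ and $\int U\,\mu_F^{(1)}$ is uniformly bounded, Markov's inequality yields, for any $\eps>0$, a radius $R_\eps$ such that $\mu_F^{(1)}(\R^2\setminus B_{R_\eps}) \leq \eps$ for $N$ large enough.

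By Prokhorov's theorem, along any subsequence there is a further subsequence with $\mu_F^{(1)}\rightharpoonup\mu$ for some probability measure $\mu$. I would then verify that $\mu$ is admissible for \eqref{eq:bathtub intro U} and attains the infimum. For the density constraint $\mu\leq(\pi\ell)^{-1}$, I would invoke the averaged density bound \eqref{eq:densbound} from~\cite{LieRouYng-17}: stated in the original variable $z$ on a spatial scale of order one, it translates in the rescaled variable $x$ to an averaging scale of order $N^{-1/2}$, which vanishes in the limit, so that any weak limit inherits $\mu\leq(\pi\ell)^{-1}$ as an $L^\infty$ bound. For the energy side, a truncation argument suffices: for $M>0$, the function $U_M := \min(U,M)$ is bounded continuous, hence $\int U_M\,d\mu = \lim\int U_M\,\mu_F^{(1)} \leq \liminf\int U\,\mu_F^{(1)} = \ebt_U(\ell)$, and letting $M\to\infty$ by monotone convergence gives $\int U\,d\mu\leq\ebt_U(\ell)$. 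Thus $\mu$ is an admissible minimizer of the bathtub problem. The no-flat-pieces assumption on $U$ ensures uniqueness of this minimizer (the standard bathtub argument of \cite[Theorem~1.14]{LieLos-01} determines $\rhobt_U$ uniquely by its support, which in turn is uniquely determined by $U$), forcing $\mu=\rhobt_U$. Because every subsequence admits a further subsequence with the same limit, the full sequence converges.

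The main obstacle is the first point in the second paragraph: transferring the averaged bound \eqref{eq:densbound} of \cite{LieRouYng-17} into a genuine $L^\infty$ bound on the weak limit $\mu$. This requires being precise about the averaging scale in the statement of \eqref{eq:densbound} and checking that, once expressed in the scaled variable $x$, it shrinks to zero as $N\to\infty$, so that testing $\mu_F^{(1)}$ against an arbitrary nonnegative $C_c^\infty$ function $\varphi$ and passing to the limit yields $\int\varphi\,d\mu \leq (\pi\ell)^{-1}\int\varphi$. Everything else is a routine application of tightness, lower semicontinuity, and uniqueness of the bathtub minimizer.
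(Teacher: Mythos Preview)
Your argument is correct and takes a genuinely different route from the paper's. You identify the weak limit $\mu$ directly as an admissible competitor for the bathtub problem~\eqref{eq:bathtub intro U}, using the averaged density bound of~\cite{LieRouYng-17} to get $\mu\leq(\pi\ell)^{-1}$ and lower semicontinuity to get $\int U\,d\mu\leq \ebt_U(\ell)$; uniqueness then forces $\mu=\rhobt_U$. The paper instead runs a Feynman--Hellmann perturbation argument: it applies the energy lower bound of~\cite[Corollary~2.3]{LieRouYng-17} not to $U$ but to $U+\eps\chi$ for an arbitrary test function $\chi$, extracts $\int\chi\,\mu_\infty\geq\int\chi\,\rhobt_{U+\eps\chi}$ after dividing by $\eps>0$ and passing to the limit, lets $\eps\to 0$, and repeats with $\eps<0$ for the reverse inequality. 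The paper's approach has the virtue of invoking only the energy statement already used for Theorem~\ref{thm:main}, so no separate passage from averaged to pointwise density bounds is needed; your approach is conceptually cleaner (the limit is shown to be a minimizer, period) but requires the extra step you flag, namely checking that the averaging scale in~\cite{LieRouYng-17} is $o(\sqrt N)$ in the unscaled variable so that it vanishes after rescaling. That step is indeed available from the results of~\cite{LieRouYng-17}, so your proof goes through.
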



Most of the paper is concerned with the proof of Theorem~\ref{thm:upbound} and we sketch here the main ideas. 
The starting point is Laughlin's {\em plasma analogy}~\cite{Laughlin-83,Laughlin-87} where the $N$-particle density of the state  \eqref{eq:fullcorr qh} is written as a Boltzmann-Gibbs factor for a classical 2D Jellium\footnote{Classical charged particles in a uniform neutralizing background of opposite charge.} with additional repulsive point charges fixed at the locations of the quasi-holes, i.e.,  at the zeros of $f$.  Following the method of~\cite{RouSerYng-13b} we shall in Section 4 rigorously justify a mean-field/zero-temperature approximation\footnote{We note that the zero-temperature approximation is only valid if the degree of the polynomial $f$ is not too large, see~\cite{RouSerYng-13a,RouSerYng-13b} for a discussion of this point.}  for this classical problem as $N\to\infty$. This yields, for a given~$f$,  a mean-field approximation of $\rho_f$ given by the unique solution of the variational equation for an electrostatic minimization problem. 
 
 For the construction of adequate trial states giving the upper bound \eqref{eq:ener up bound} we now face an {\em inverse} problem: Given the desired charge density profile $\rhobt_U$, find a distribution of repulsive point charges (quasi-holes, i.e., zeros of $f$) whose addition to the usual Jellium Hamiltonian deforms the Laughlin droplet to $\rhobt_U$.  In contrast to the direct problem the solution is not unique but, as far as the upper bound for the energy is concerned, any solution will do. In~\cite{RouSerYng-13b} we constructed solutions yielding very specific, radial, charge density profiles $\rhobt_U$. Constructing solutions yielding a general density profile $\rhobt_U$ is the main addition of the present paper. 

If we allow the distribution of the quasi-hole charges in the plasma Hamiltonian to be continuous, a simple solution to the variational equation can be given explicitly: We enclose the support $\Omega_0$ of $\rhobt_U$ in a disk $D(0,R)$ of some radius $R$ centered at the origin and fill the  complementary set $D(0,R)\setminus \Omega_0$ with a uniform distribution of quasi-holes. If the quasi-hole charge density is chosen so that the total charge density in $D(0,R)$ is constant, then Newton's Theorem ensures that the variational equation is fulfilled and its solution is $\rhobt_U$.  Further solutions to the inverse problem, also rooted in Newton's Theorem, are presented in the Appendix.

The next step is to approximate the  continuous distribution of quasi-hole charges by a discrete one. We achieve this by putting discrete charges on a lattice with suitable spacing tending to zero, but other discretizations are also possible. Requiring the polynomials in Theorem~\ref{thm:upbound} to vanish at the lattice points in $D(0,R)\setminus \Omega_0$ leads to the desired result in the limit of zero spacing.

In Section~\ref{sec:refine} we discuss possible refinements of our results to accommodate $N$-dependent potentials with variations on mesoscopic scales. 

\section{Plasma analogy and the inverse electrostatic problem}\label{sec:electrostatic}

From now on we shall mainly work with the scaled variables $x=z/\sqrt N$ and particle densities will be normalized to $1$, i.e., they are probability measures.  Like in our previous papers~\cite{LieRouYng-17,RouYng-14,RouYng-15} we shall rely on the plasma analogy mentioned above, writing the squares of many-body wave functions as Boltzmann-Gibbs factors for a one-component Coulomb gas with external charges.  The Hamiltonian defined below,  corresponding to functions of the type~\eqref{eq:fullcorr qh}, has in the scaled variables a form appropriate for a mean field approximation in the limit $N\to \infty$.

\subsection{The plasma Hamiltonian}

A polynomial $f(z)$ in \eqref{eq:fullcorr qh} can be factorized as
\begin{equation}\label{eq:qhfactor} 
f(z)=c_N\prod_{j=1}^J (z-\sqrt N a_j)^{N q_j/2}  
\end{equation}
with complex numbers $a_1,\ldots,a_J$ and integers $Nq_j/2$. The scaled $N$-particle probability density corresponding to~\eqref{eq:fullcorr qh}, 
\begin{equation}\label{eq:scaled prob}
 \mu_f ^{(N)} (x_1,\ldots,x_N) := N ^N | \Psi_f (\sqrt{N} x_1 ,\ldots, \sqrt{N} x_N)| ^2,
\end{equation}
is a probability measure on $\R^{2N}$ which can be written as
\begin{equation}\label{eq:plasma analogy}
\mu_f ^{(N)}(x_1,\ldots,x_N) = \frac{1}{\mathcal{Z}_N} \exp\left( - N \mathcal H_N(x_1,\dots, x_N) \right) 
\end{equation}
with $\mathcal{Z}_N$ a normalization factor (partition function) and 
\begin{equation}\label{eq:coul hamil}
\mathcal H_N(x_1,\dots, x_N)=\sum_{i=1}^N\left(\sum_{j=1}^J q_j\log \frac{1}{|x_i-a_j|}+|x_i|^2\right) + \frac {2\ell}N\sum_{k<l}\log \frac{1}{|x_k-x_l|}. 
\end{equation}
The auxiliary, classical hamiltonian function \eqref{eq:coul hamil} describes $N$ point charges, interacting through a 2D Coulomb potential with coupling constant ${2\ell}/N$ and with the external potential 
\beq 
\sum_{j=1}^J q_j\log \frac{1}{|x-a_j|}+|x|^2. 
\eeq
The one-particle density $\rho_f$ of~\eqref{eq:fullcorr qh} and its scaled version $\mu_f ^{(1)}$, satisfy 
\begin{equation}\label{eq:scale one part}
\rho_f (\sqrt{N} x ) = \mu_f ^{(1)} (x) = \int_{\R^{2(N-1)}} \mu_f ^{(N)} (x,x_2,\ldots,x_N) dx_2\ldots dx_N.
\end{equation}
Note that due to the scaling, the temperature in the classical Gibbs state~\eqref{eq:plasma analogy} is 
$$ T = N ^{-1},$$
and thus small in the limit $N\to \infty$. Also the coupling constant is $O(N^{-1})$. Thus a zero-temperature mean-field procedure can be expected to be adequate to determine the scaled 1-particle density $\mu^{(1)}_f(x)$ and this will be proved rigorously in Section 4. In the present section we calculate and optimize the mean-field densities associated with suitable polynomials $f$. 

\medskip

For a given $f$ as in \eqref{eq:qhfactor} the mean-field density is found by minimizing the mean field energy functional corresponding to \eqref{eq:coul hamil} for a 1-particle probability density $\rho$: 
\begin{equation} \label{eq:mffuncf}
\mathcal E_f^{\rm el}[\rho] = \int_{\R^2} \left(\Phi_{{\rm qh},f}(x)+|x|^2\right)\rho(x) dx + \ell\iint_{\R^2 \times \R ^2} \rho(x)\log\frac 1{|x-y|}\rho(y) dx dy
\end{equation}
where 
\begin{equation} \label{eq:potf}
\Phi_{{\rm qh},f}(x)=\int_{\R^2} Q_{{\rm qh},f}(y)\log\frac 1{|x-y|} dy
\end{equation}
is the potential of the quasi-hole charge density 
\begin{equation} \label{eq:chargef}
 Q_{{\rm qh},f}(x)=\sum_{j=1} ^J q_j\delta(x-a_j). 
\end{equation}
As in~\cite{RouSerYng-13a,RouSerYng-13b}, the label \lq\lq el\rq\rq\ indicates that the functional is the \lq\lq electrostatic\rq\rq\ part of the full mean-field free energy functional, 
\begin{equation}\label{eq:mffull}
\mathcal F^{\rm MF}_f[\rho]= \cEel_f[\rho]+N^{-1}\int_{\R^2} \rho(x)\log\rho(x) dx,
\end{equation}
where the last term is the entropic contribution to the free energy.

In~\cite{RouSerYng-13b} both functionals,  $\cEel_f$ and $ \mathcal F^{\rm MF}_f$,  were studied for the special case of a single quasi-hole at the origin, i.e., $J=1$, $a_1=0$ and $q_1\ll N$.  In particular it was shown that the entropic contribution can be neglected in the limit $N\to\infty$ and that  $\cEel_f$ leads to a good approximation of the quantum mechanical one-particle density in this special case. 

As discussed in Section 4 the method generalizes to  the factors~\eqref{eq:qhfactor} provided $J q_j\ll N$ and the positions $a_j$ of the quasi holes stay within a disk of radius $\ll N^{1/2}$ for all $j$. The upshot is as follows: As $N\to\infty$ the unique normalized minimizer $\rhoel_f$ of the functional~\eqref{eq:mffuncf}, satisfying
\begin{equation}\label{eq:min ener class}
\mathcal{E} ^{\rm el}_f [\rhoel_f]= \min \left\{ \mathcal{E} ^{\rm el} [\rho], \: \intR \rho = 1,\: \rho \geq 0\right\} =: \Eel_f, 
\end{equation}
is a good approximation of  $\mu^{(1)}_f(x)=\rho_f(\sqrt N x)$. Moreover, the  density $\rhoel_f$ takes the constant value $(\ell\pi)^{-1}$ where it is nonzero. The problem we address in the present section is to suitably arrange the quasi-holes charge distribution~\eqref{eq:chargef} so that $\rhoel_f$ approximates the (scaled) bathtub minimizer $\rho^{\rm bt}_U$.

\subsection{Approximating the bathtub minimizer}
A  minimizer for the bathtub energy \eqref{eq:scale bathtub} it is explicitly given as follows~\cite[Theorem~1.14]{LieLos-01}: For a smooth potential  $U$, bounded below,  the sublevel sets
\begin{equation}\label{eq:sub level}
\{x:\, U(x)\leq e\} 
\end{equation}
increase  from the empty set to the whole of $\R^2$ as $e$ increases from below the minimum value of $U$ to $\infty$. Let $e_0$ be the smallest energy value such that the area $|\{x:\, U(x)\leq e_0\}|$  is $\geq \pi\ell$. Then there is a (possibly not unique)  subset $\Omega_0$ of this sublevel set with $|\Omega_0|=\pi\ell$ as well as $\{x:\, U(x)< e_0\}\subset\Omega_0$. Note that $\Omega_0$ need not be connected. The corresponding density  
\begin{equation}\label{eq:bathtubrho}
\rho_0(x)=\begin{cases} 
(\pi\ell)^{-1} &\hbox{ if }x\in\Omega_0
\\ 0 &\hbox{ otherwise }  \end{cases} 
\end{equation}
minimizes the bathtub energy~\eqref{eq:bathtub intro U}. By our assumption that $U$ has no flat pieces, $\Omega_0$ is in fact unique and $\rho_0=\rho^{\rm bt}_U$.

We approximate $\rho_0$ in the metric defined by the Coulomb kernel: If $\sigma$ is any finite, signed measure with $\int |\log|x| \sigma(x)| dx<\infty$ we define
\begin{equation}\label{eq:Dsigma}
D(\sigma,\sigma):=\frac 12\iint_{\R^2 \times \R ^2} \sigma(x) \log\frac 1{|x-y|}\sigma(y)\,dx\,dy. 
\end{equation}
If $\int\sigma(x)dx=0$, in particular if  $\sigma=\rho_1-\rho_2$ with two probability distributions $\rho_1, \rho_2$, then $D(\sigma,\sigma)\geq 0$ and 
\begin{equation} \label{eq:metric} 
d(\rho_1,\rho_2):=D(\rho_1-\rho_2,\rho_1-\rho_2)^{1/2} 
\end{equation}
is a metric on the set of probability measures. If $\chi$ is a differentiable function, then by Fourier transform and Cauchy-Schwarz inequality 
\begin{equation}\label{eq:CS}
 \left|\int (\rho_1(x)-\rho_2(x))\chi(x) dx\right|\leq C\, d(\rho_1,\rho_2)\,\Vert\nabla \chi\Vert_{L^2}.   
\end{equation}
The main result of this section is the following:

\begin{proposition}[\textbf{Inverse electrostatic problem}]\label{pro:inverse el}\mbox{}\\
With the previous assumptions and notation, there exists a (sequence of) polynomial(s) $f_\delta$ indexed by a small $N$-dependent parameter $\delta >0$ such that, denoting $\rho_\delta$ the corresponding electrostatic minimizer of~\eqref{eq:mffuncf}, we have
\begin{equation}\label{eq:inverse el}
D(\rho_0-\rho_\delta,\rho_0-\rho_\delta) \leq C N ^{-1/2}
\end{equation} 
where $C$ is a constant depending only on $\ell$ and $U$.
\end{proposition}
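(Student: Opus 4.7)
The strategy has three parts: (i) exhibit a continuous quasi-hole charge $Q^c$ whose associated electrostatic problem has $\rho_0$ as its minimizer, (ii) discretize $Q^c$ onto a lattice with integer-quantized multiplicities to produce a legitimate polynomial $f_\delta$, and (iii) control $D(\rho_0-\rho_\delta,\rho_0-\rho_\delta)$ by convexity/stability of $\cEel$.

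For (i): enclose $\Omega_0$ in a disk $D(0,R)$ with $\pi R^2 > \pi\ell = |\Omega_0|$, and set $Q^c := \tfrac{2}{\pi}\,\one_{D(0,R)\setminus \Omega_0}$ with log-potential $\Phi^c$. Since $\rho_0 = (\pi\ell)^{-1}\one_{\Omega_0}$, the combined density $Q^c + 2\ell\rho_0$ equals the constant $2/\pi$ on $D(0,R)$ and vanishes outside, so Newton's Theorem for this radially symmetric uniformly charged disk yields $\Phi^c(x) + 2\ell\Phi_{\rho_0}(x) = -|x|^2 + C_R$ on $D(0,R)$. Consequently $\psi^c(x):= |x|^2 + \Phi^c(x) + 2\ell\Phi_{\rho_0}(x) \equiv C_R$ throughout $D(0,R)$, and a one-variable radial computation gives $\psi^c > C_R$ for $|x|>R$, which is precisely the variational equation identifying $\rho_0$ as the unique minimizer of the continuous electrostatic functional with external charge $Q^c$. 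For (ii): tile $D(0,R)\setminus\Omega_0$, up to a thin boundary strip of area $O(\delta)$, by square cells $C_k$ of side $\delta$ with centers $a_k$, and set $q_k:=\tfrac{2\delta^2}{\pi}$, rounded so that $Nq_k/2\in\N$ (at cost $O(N^{-1})$ per cell, with a single global mass correction of order $O(\delta)$ to match $\int Q^c$). The polynomial
\[
  f_\delta(z) := \prod_k \bigl(z - \sqrt N\, a_k\bigr)^{Nq_k/2}
\]
then realizes the discrete quasi-hole charge $Q_\delta := \sum_k q_k\delta_{a_k}$.

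For (iii): write $\tau:=\rho_0 - \rho_\delta$ and $\kappa:=Q^c - Q_\delta$. Strong convexity of $\cEel$ in the Coulomb metric $D$, combined with the Euler-Lagrange characterizations of $\rho_0$ in the continuous problem and of $\rho_\delta$ in the $f_\delta$-problem, yields the basic inequality
\[
  \ell\, D(\tau,\tau) \;\leq\; |D(\kappa,\tau)|.
\]
The principal difficulty is that $D(\kappa,\kappa) = +\infty$ because $Q_\delta$ consists of point masses, so Cauchy-Schwarz in $D$ is not directly applicable. To circumvent this, smear each $q_k\delta_{a_k}$ uniformly over a disk of radius $\delta/2$ around $a_k$ to form $Q_\delta^\sharp$, and split $\kappa = \kappa^\sharp + (Q_\delta^\sharp - Q_\delta)$ with $\kappa^\sharp := Q^c - Q_\delta^\sharp$. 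Newton's Theorem strictly localizes the potential of $Q_\delta^\sharp - Q_\delta$ inside the (disjoint) smearing disks, with $L^1$-mass $O(q_k\delta^2) = O(\delta^4)$ per cell; since $\rho_0$ and $\rho_\delta$ are both bounded by $(\pi\ell)^{-1}$ (the latter as a consequence of the obstacle-problem structure of its Euler-Lagrange equation), pairing against $\tau$ contributes $O(\delta^2)$ in aggregate. For the smooth remainder $\kappa^\sharp$, Cauchy-Schwarz $|D(\kappa^\sharp,\tau)|\leq D(\kappa^\sharp,\kappa^\sharp)^{1/2}D(\tau,\tau)^{1/2}$ is legitimate, and a multipole/quadrature estimate exploiting the vanishing of the dipole moment of each cell correction yields $D(\kappa^\sharp,\kappa^\sharp) = O(\delta^2)$, modulo a possible logarithmic factor absorbed in the constant. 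These two ingredients close the inequality to $D(\tau,\tau)\leq C\delta^2$, and setting $\delta\sim N^{-1/4}$ (adjusting slightly to respect the integer constraint $Nq_k/2\in\N$) gives the proposition. Beyond this main point, only routine care is needed for the boundary layer near $\partial\Omega_0\cup\partial D(0,R)$ and for the mass correction enforcing $\int Q_\delta = \int Q^c$.
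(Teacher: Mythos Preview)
Your proposal is correct and reaches the stated bound, but it takes a noticeably different and more elaborate route than the paper.

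The paper sidesteps the smearing/multipole machinery entirely by exploiting duality. Rather than bounding $|D(\kappa,\tau)|$ via Cauchy--Schwarz in the Coulomb metric (which forces you to confront the infinite self-energy of $Q_\delta$), the paper passes the convolution to the other side: $\int(\Phi_\delta-\Phi_0)\rho=\int(Q_\delta-Q_0)\Phi_\rho$. Since $\rho_0,\rho_\delta\in L^1\cap L^\infty$ with uniform bounds, the potentials $\Phi_{\rho_0},\Phi_{\rho_\delta}$ are uniformly bounded and Lipschitz, and a one-line Riemann-sum estimate against the smooth test function $\Phi_\rho$ gives the error $O(\delta)$ directly. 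Combined with the stability inequality (your ``basic inequality''), this yields $D(\tau,\tau)\le C\delta$ with no smearing, no multipole bookkeeping, no boundary-layer or mass-correction subtleties. The paper then takes the \emph{finest} admissible grid $\delta=\sqrt{\pi/N}$, which makes every exponent $Nq_k/2$ equal to $1$ exactly---so no rounding is needed at all.

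Your approach, by contrast, is sharper in $\delta$: the smearing plus vanishing-dipole quadrature genuinely gives $D(\tau,\tau)=O(\delta^2)$ rather than $O(\delta)$. This means that with the natural choice $\delta=\sqrt{\pi/N}$ (which also trivializes the integer constraint) your argument would actually deliver $D(\tau,\tau)=O(N^{-1})$, better than the proposition claims. Your choice $\delta\sim N^{-1/4}$ is therefore unnecessarily coarse and somewhat puzzling---it squanders the quadratic gain, and makes the integrality issue harder rather than easier. In short: the paper's argument is shorter and more robust; yours is heavier but would buy an improved rate if you picked the right $\delta$.
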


The rest of this section is concerned with the proof of this result. As previously mentioned, if the quasi-holes charge density is allowed to be continuous, one can achieve identity, $\rhoel = \rho_0$. We discuss this first, and then turn to approximating the exact solution using a discrete set of point charges.

\subsection{Smeared quasi-hole charges}\label{sec:inv prob}

Let us generalize~\eqref{eq:mffuncf} and consider
\begin{equation}\label{eq:mffunc} 
 \cEel[\rho]=\int_{\R^2} \left(\Phi_{\rm qh}(x)+|x|^2\right)\rho(x) dx + 
2\ell D(\rho,\rho)
\end{equation}
where 
\begin{equation}\label{eq:gen pot}
 \Phi_{\rm qh}(x)=\int Q_{\rm qh}(y)\log\frac 1{|x-y|} dy 
\end{equation}
is the potential of an arbitrary  positive measure $Q_{\rm qh}(x)$ of finite mass. In particular, $Q_{\rm qh}$ can be a measurable, positive function,  but the discrete measure~\eqref{eq:chargef}, or a mixture of discrete and continuous parts as discussed in Appendix~\ref{sec:app}, are also included. The subsidiary conditions for the minimization problem are
\begin{equation}\label{eq:constraints}
 \rho\geq 0,\qquad \int_{\R^2} \rho=1.
\end{equation}
We use the notation 
\begin{equation}\label{eq:def pot}
\Phi_\rho=\rho*\log\frac{1}{|\cdot|} 
\end{equation}
for the potential generated by a charge density $\rho$.

\begin{lemma}[\textbf{Inverse problem with smeared charges}]\label{lem:smeared}\mbox{}\\
Let $D(0,R)$ be a disk with center at the origin containing $\Omega_0$, the support of the bathtub minimizer $\rho_0$. Define
\begin{equation}\label{eq:cont charge}
Q_{\rm qh}(x) = Q_0(x):=\begin{cases} 
2/\pi &\hbox{ if }x\in D(0,R)\setminus\Omega_0 
\\ 0 &\hbox{otherwise }  \end{cases}.
 \end{equation}
Then, the corresponding unique minimizer of~\eqref{eq:mffunc} is equal to $\rho_0$.
\end{lemma}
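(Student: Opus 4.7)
The plan is to directly verify that $\rho_0$ satisfies the Euler--Lagrange equation for the minimization of $\cEel$ with $Q_{\rm qh}=Q_0$, and then invoke strict convexity of the functional (which stems from the positive-definiteness of $D$ on signed measures of zero total mass) to conclude that $\rho_0$ is the unique minimizer. Concretely, the variational conditions for a probability density are that there exists a constant $\mu$ (Lagrange multiplier) such that
\[
\Phi_{\rm qh}(x)+|x|^2+2\ell\,\Phi_{\rho_0}(x) = \mu \quad\text{on }\Omega_0,\qquad
\Phi_{\rm qh}(x)+|x|^2+2\ell\,\Phi_{\rho_0}(x) \geq \mu \quad\text{on }\R^2,
\]
so I would aim at verifying these.

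The key algebraic observation is that the \emph{total} effective charge distribution sourcing the potential $\Phi_{\rm qh}+2\ell\Phi_{\rho_0}$ is
\[
Q_0 + 2\ell\,\rho_0 \;=\; \tfrac{2}{\pi}\,\mathbf{1}_{D(0,R)\setminus\Omega_0} + 2\ell\cdot\tfrac{1}{\pi\ell}\,\mathbf{1}_{\Omega_0} \;=\; \tfrac{2}{\pi}\,\mathbf{1}_{D(0,R)},
\]
because the bathtub density $\rho_0=(\pi\ell)^{-1}$ on $\Omega_0$ and $Q_0=2/\pi$ on $D(0,R)\setminus\Omega_0$ were tuned precisely so that the two pieces match, producing a uniform disk. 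This is the heart of the argument: the clever choice of $Q_0$ is designed to ``fill in'' the hole left by $\rho_0$ and create a spherically symmetric uniform source.

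Given this, Newton's theorem in two dimensions (or equivalently, solving $-\Delta \Phi = 2\pi\cdot(2/\pi)=4$ radially with the appropriate matching conditions at $|x|=R$) gives the explicit potential
\[
\Phi_{Q_0+2\ell\rho_0}(x) \;=\; \begin{cases} \,C_R - |x|^2 & \text{if }|x|\leq R,\\[2pt] 2R^2\log(1/|x|) & \text{if }|x|\geq R,\end{cases}
\]
where $C_R := R^2 + 2R^2\log(1/R)$ is fixed by continuity at $|x|=R$. Substituting into the left-hand side of the EL condition yields exactly $\mu=C_R$ on all of $D(0,R)$, hence in particular on $\Omega_0=\mathrm{supp}(\rho_0)$, confirming the equality part. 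For $|x|>R$ one sets $h(r)=2R^2\log(1/r)+r^2$ and checks $h(R)=\mu$, $h'(R)=0$, $h''(r)>0$, so $h(|x|)\geq \mu$ with strict inequality for $|x|>R$, giving the inequality part.

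The only point requiring care is the strict convexity argument for uniqueness: because the EL condition is satisfied with equality on the larger set $D(0,R)\supset\Omega_0$ rather than only on $\Omega_0$, one cannot read uniqueness off the KKT conditions alone. However, the functional $\rho\mapsto\cEel[\rho]$ is strictly convex on the affine set $\{\rho\geq 0,\ \int\rho=1\}$ thanks to the fact that $D(\sigma,\sigma)>0$ for any nonzero signed measure $\sigma$ with zero mean (as recalled in~\eqref{eq:Dsigma}--\eqref{eq:metric}), which forces uniqueness of the minimizer. This is a standard fact for two-dimensional logarithmic energies and should require no more than a reference to it. I expect this small subtlety to be the only real step beyond the direct calculation; the rest is essentially bookkeeping via Newton's theorem.
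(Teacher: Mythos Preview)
Your proposal is correct and follows essentially the same approach as the paper: both identify the key observation $Q_0+2\ell\rho_0=\tfrac{2}{\pi}\mathbf{1}_{D(0,R)}$, invoke Newton's theorem to see that the combined potential plus $|x|^2$ is constant on $D(0,R)$, and then appeal to strict convexity of the functional for uniqueness. Your write-up is in fact slightly more explicit than the paper's in verifying the inequality branch of the Euler--Lagrange condition for $|x|>R$, and you correctly flag the subtlety that equality holds on the larger set $D(0,R)$ rather than just $\Omega_0$; the paper handles this the same way you do, by citing strict convexity (with references) rather than KKT alone.
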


\begin{proof}
By standard arguments, one sees that the functional~\eqref{eq:mffunc} with a general $Q_{\rm qh}$ is bounded from below, strictly convex and has a unique minimizer, $\rho$, determined by $Q_{\rm qh}$. The variational equation for the minimizer reads
\begin{equation} \label{eq:var}
\Phi_{\rm qh}(x)+|x|^2 +
2\ell\Phi_\rho(x)
=\begin{cases} 
C^{\rm el} &\hbox{ if } \rho(x)>0
\\ \geq C^{\rm el} &\hbox{ if } \rho(x)=0 \end{cases} 
\end{equation}
with
\begin{equation}\label{eq:varconstant}
 C ^{\rm el}=\cEel[\rho]+
 2\ell D(\rho,\rho).
\end{equation}
Using the strict convexity of the functional (see~\cite[Theorem~1.2]{ChaGozZit-13} or~\cite[Theorem~II.10]{LieSim-77b}) one can show that the variational equation determines the minimizer uniquely. In particular, if $Q_{\rm qh}=Q_0$ and a density $\rho$ satisfies~\eqref{eq:var} with some constant $C$, then $\rho = \rho_0$ and $C = C ^{\rm el}$ is given by~\eqref{eq:varconstant}. 

Applying the Laplacian to \eqref{eq:var} gives
\begin{equation}\label{eq:delta var}
\rho(x)=\frac 1{\pi\ell}-\frac 1{2\ell} Q_{\rm qh}(x)\,\quad\text{ if }\rho(x)\neq 0. 
\end{equation}
Thus, if $Q_{\rm qh}(x)\geq 2/\pi$, then $\rho$ and $Q_{\rm qh}$ have disjoint supports (because $\rho\geq 0$), and\footnote{The rigorous derivation of~\eqref{eq:value} needs some care because the terms in~\eqref{eq:var} are not twice continuously differentiable, but one can argue as in~\cite[Equations (3.42)-(3.43)]{LieRouYng-17}, based on arguments from~\cite{FraLie-16}.}
\begin{equation}
\label{eq:value} \rho(x)=(\pi\ell)^{-1} 
\end{equation}
a.e.\ where $\rho\neq 0$. 

Now, with the definition~\eqref{eq:cont charge} we have
\begin{equation}
 Q_0(x)+2\ell\rho_0(x)=
 \begin{cases} 2/\pi &\hbox{ if }x\in D(0,R)
 \\ 0 &\hbox{otherwise }
 \end{cases} 
\end{equation}
and one can calculate the potential associated to the rotationally symmetric $Q_0 +2\ell\rho_0$ using Newton's theorem~\cite[Theorem~9.7]{LieLos-01}. In particular, it is constant on $D(0,R)$ which includes the support of $\rho_0$. This shows that~\eqref{eq:var} holds for $\rho=\rho_0$ with
$C^{\rm el}=C_R=R^2-2R^2\log R$ and we deduce that indeed $\rho_0$ is the unique minimizer of~\eqref{eq:mffunc}.
\end{proof}

Some alternatives to the solution \eqref{eq:cont charge} are discussed in Appendix~\ref{sec:app}.

\subsection{Discrete quasi-hole charges}

We now complete the 

\begin{proof}[Proof of Proposition~\ref{pro:inverse el}]
The charge density \eqref{eq:chargef}, corresponding to the  quasi-hole factor  \eqref{eq:fullcorr qh}, describes  discrete point charges of magnitude $q_j$ at positions $a_j$. For a given discretization scale $\delta>0$ we would like to approximate the continuous  distribution $Q_0$ with a discrete one, denoted by $Q_\delta$, such that the corresponding minimizer $\rhoel_f$ of  \eqref{eq:mffunc} with $Q_{\rm qh}=Q_\delta$ approximates the bathtub density $\rho_0$ with controllable errors as $\delta\to 0$. The simplest way is to take all $q_j$ equal to the smallest possible value it can have such that $Nq_j/2$ is a positive integer:
\beq 
q_j=2/N\quad\text{for all $j$},
\eeq
and distribute the points $a_j$ on a grid in the complementary set $\Omega_0'=D(0,R)\setminus\Omega_0$ with lattice constant $\delta\to 0$. The discrete charge density is now a sum of delta-functions,\begin{equation}\label{eq:disk charge}
Q_\delta(x) = \frac{2}{N}\sum_{j=1}^M \delta(x-a_j), 
\end{equation}
corresponding to the quasi-hole factor
\begin{equation}\label{eq: fqh delta}
f_\delta(z)=c_N\prod_{j=1}^M(z-\sqrt N a_j). 
\end{equation}
Thus, in the notation \eqref{eq:chargef}, $Q_\delta$ is shorthand for $Q_{{\rm qh}, f_\delta}$. We denote the corresponding potential \eqref{eq:gen pot} by $\Phi_\delta$. Likewise $\Phi_0$ stands for the potential generated by $Q_0$.

In order that $\Omega_0'$ is approximately covered by the grid and the average charge density is the same as for $Q_0$, i.e., $2/\pi$, we must have
\beq M\delta^2\simeq |\Omega_0'|\quad\text{and}\quad \frac {M\cdot(2/N)}{M\cdot \delta^2}=\frac 2\pi\eeq
which means

\begin{equation}\label{eq:choose grid}
M\simeq \frac N\pi |\Omega_0'|\quad\text{and}\quad\delta =\delta_N= \sqrt{\frac{\pi}{N}}. 
\end{equation}

We take the $a_j$ to be the mid-points of the squares fully included in $\Omega_0'$ (that is, not intersecting the boundary) labeled by $j=1 \ldots M$. In this way we have
\begin{equation} \label{eq:Mq} 
 M\delta^2=|\Omega_0'| + O(N^{-1/2})
\end{equation}
because the area of the part of $\Omega_0'$ (which is a \emph{fixed}, regular set) that does not get covered in this procedure is clearly bounded above by the length of the boundary $\partial\Omega_0'$ of $\Omega_0'$ times the side length $\delta\sim N^{-1/2}$ of a square of the grid.

If $g$ is a differentiable function then
 \begin{equation}\label{eq:riemann}
 \left|\intR (Q_0(x)-Q_\delta(x)) g(x) dx\right| \leq  C N^{-1/2} \,|\Omega_0'|\,  \sup_{\Omega'_0}|\nabla g| + C N ^{-1/2} \,|\partial\Omega_0'|\,\sup_{\Omega_0'} |g|. 
 \end{equation}
The first error term comes from approximating $g$ by a constant in each square of the grid, and the second one from the part of $\Omega_0'$ not covered. 
 
The minimizers $\rho_0$ and $\rho_\delta$ corresponding  respectively to $Q_0$ and $Q_\delta$  satisfy (cf.\ \eqref{eq:value})
\begin{equation}\label{eq:densbounds}
\Vert \rho_0\Vert_{L^1}=\Vert \rho_\delta\Vert_{L^1}=1 \mbox{ and } \Vert \rho_0\Vert_{L^\infty}=\Vert \rho_\delta\Vert_{L^\infty}=\frac1{\pi\ell}
\end{equation}
which implies, for the associated potentials,
\begin{equation}\label{eq:potbound} 
\Vert\nabla \Phi_{\rho_0} \Vert_{L^\infty}\leq\frac{2\pi+1}{(\pi\ell)^{1/2}} \quad \mbox{ and } \quad \Vert\nabla \Phi_{\rho_\delta} \Vert_{L^\infty} \leq\frac{2\pi+1}{(\pi\ell)^{1/2}}.
\end{equation}
Indeed, if $g=h*\log\frac 1{|\,\cdot\,|}$, then for all $r>0$
\begin{align}\label{eq:bound pot}
|\nabla g(x)|&\leq \intR \frac {|h(y)|}{|x-y|}dy = \int_{|x-y|\leq r} \frac{|h(y)|}{|x-y|}dy+\int_{|x-y|\geq r} \frac {|h(y)|}{|x-y|}dy\nonumber\\
&\leq 2\pi r\Vert h\Vert_\infty+\frac 1r\Vert h\Vert_1 
\end{align}
and optimization over $r$, using  \eqref{eq:densbounds} for $h=\rho_0$ and $h=\rho_\delta$ respectively, gives~\eqref{eq:potbound}. By a similar argument we obtain that 
\begin{equation}\label{eq:potbound bis} 
\Vert \Phi_{\rho_0} \Vert_{L^\infty}\leq C \quad \mbox{ and } \quad \Vert \Phi_{\rho_\delta} \Vert_{L^\infty} \leq C 
\end{equation}
where the constant $C$ depends only on $\ell$ and $R$.

Using $\rho_0$ as a trial density for the energy functional $\cEel_{\delta}$ given by \eqref{eq:mffunc} with $\Phi_{\rm qh}=\Phi_\delta$ we get
\begin{equation}\label{eq:approx1}
\cEel_\delta[\rho_\delta]\leq \cEel_\delta[\rho_0] = \cEel_0[\rho_0]+ \intR \left(\Phi_\delta- \Phi_0\right) \rho_0. 
\end{equation}
But 
\beq \intR \left(\Phi_\delta- \Phi_0\right) \rho_0  = \intR \left( Q_\delta - Q_0 \right) \Phi_{\rho_0}\eeq
so that, employing~\eqref{eq:riemann},~\eqref{eq:potbound} and~\eqref{eq:potbound bis} we obtain 
\begin{equation}\label{eq:approx1 bis}
\cEel_\delta[\rho_\delta]\leq \cEel_0[\rho_0] + C N^{-1/2}. 
\end{equation}
Similarly, using $\rho_\delta$ as a trial density for the functional $\cEel_0$  with $\Phi_{\rm qh}=\Phi_0$,
\begin{equation}\label{eq:approx2}
\cEel_0[\rho_0]\leq \cEel_0[\rho_\delta] \leq \cEel_\delta [\rho_\delta] + CN ^{-1/2} 
\end{equation}
so that the energies coincide in the limit. On the other hand, using the variational equation~\eqref{eq:var}, we have the stability result \begin{equation}\label{eq:stability}
\cEel_\delta[\rho_0] \geq \cEel_\delta [\rho_\delta] + 2 \ell D\left( \rho_0- \rho_\delta, \rho_0- \rho_\delta \right) 
\end{equation}
which, upon combining with~\eqref{eq:approx1 bis} and~\eqref{eq:approx2} gives the desired bound 
\eqref{eq:inverse el} for the density.
The proof of~\eqref{eq:stability} follows~\cite[Section~3.2]{RouSerYng-13b}: 
Define 
\beq W_\delta(x):=\Phi_\delta(x)+|x|^2\eeq
and write the variational equation~\eqref{eq:var} for $\rho_\delta$ as
\begin{equation}\label{eq:var2}
W_\delta(x)+2\ell\, \Phi_{\rho_\delta}(x)=
\begin{cases} 
C^{\rm el} &\hbox{ if } \rho_\delta(x)>0\\ 
\geq C^{\rm el} &\hbox{ if } \rho_\delta(x)=0 
\end{cases}
\end{equation}
Consider now a variation $\nu$ of $\rho_\delta$ with
\begin{equation}\label{eq:nucond}
\rho_\delta+\nu\geq 0,\quad \intR \nu=0. 
\end{equation}
Using~\eqref{eq:var2} and~\eqref{eq:nucond}, noting that $\rho_\delta+\nu\geq 0$ implies that $\nu\geq 0$ where $\rho_\delta=0$, we get 
\begin{align}\label{Dbound}
\cEel_\delta[\rho_\delta+\nu]&=\cEel_\delta [\rho_\delta] + \intR \left(W_\delta +2\ell\, \Phi_{\rho_\delta}\right)\nu +2\ell D(\nu,\nu)\nonumber\\
&\geq \cEel_\delta [\rho_\delta] + C^{\rm el} \intR \nu+ 2\ell D(\nu,\nu) = \cEel_\delta [\rho_\delta] + 2\ell D(\nu,\nu).
\end{align}
which is \eqref{eq:stability} if $\nu=\rho_0-\rho_\delta$.

\end{proof}

\section{Completion of the proofs}\label{sec:conc proof}

In Proposition \ref{pro:inverse el} we considered the approximation of the bathtub density $\rho_0=\rho^{\rm bt}_U$ by the mean-field density $\rho_\delta=\rho^{\rm el}_{f}$ with $f=f_\delta$ given by \eqref{eq: fqh delta}. 
We now supply the missing piece in the proof of our main result, Theorem \ref{thm:upbound}, namely the rigorous justification of the approximation of the true quantum mechanical one-particle density $\mu^{(1)}_f$ (cf.\ Eq.\ \eqref{eq:scale one part})  by the mean field density $\rho^{\rm el}_{f}$. This part of our analysis follows closely the methods of ~\cite{RouSerYng-13b}. 

\subsection{The mean-field approximation}

The link between the quantum mechanical trial states we use in the proof of Theorem~\ref{thm:upbound} and the mean-field plasma problem discussed in Section~\ref{sec:electrostatic} is as follows: 

\begin{proposition}[\textbf{Mean-field approximation for the QM density}]\label{pro:MF plasma}\mbox{}\\
Let $f = f_\delta$ be defined as in~\eqref{eq: fqh delta} with the choices discussed in Section~\ref{sec:inv prob}. With the previous notation, we have, for any test function $\chi$,
\begin{equation}\label{eq:MF final}
\left|\intR \left(\mu^{(1)}_{f}(x)-\rhoel_{f}(x)\right)\chi(x) dx\right| \leq C \left( \frac{\log N}{N} \right) ^{1/2} \Vert \nabla \chi\Vert_{L^2} + C N^{-1/2}\Vert \nabla \chi\Vert_{L ^\infty}. 
\end{equation}
Moreover, we have the pointwise decay estimate 
\begin{equation}\label{eq:mu-decay}
0 \leq \mu^{(1)}_{f}(x)\leq Ce^{-N C(|x|^2-\log N)}. 
\end{equation}
\end{proposition}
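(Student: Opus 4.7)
The plan is to follow the quantitative mean-field strategy for 2D Coulomb gases developed in~\cite{RouSerYng-13b}, specialized to the Gibbs measure $\mu_{f_\delta}^{(N)}$ at inverse temperature $N$ with Hamiltonian $\mathcal{H}_N$ from~\eqref{eq:coul hamil}. Writing $F_N := -N^{-1}\log\mathcal{Z}_N$ for the free energy per particle, the pivotal intermediate result is a two-sided matching
\begin{equation*}
\bigl| F_N - \mathcal F^{\rm MF}_{f_\delta}[\rhoel_{f_\delta}] \bigr| \leq C\,\frac{\log N}{N}.
\end{equation*}
The upper bound comes from a factorized trial measure $\bigotimes_{i=1}^N \tilde\rho(x_i)$ in the Gibbs variational principle, where $\tilde\rho$ is a convolution of $\rhoel_{f_\delta}$ at scale $\eta \simeq N^{-1/2}$; this smoothing is needed both for the entropy to be finite and for the logarithmic potentials $\log(1/|x-a_j|)$ to be integrable, and produces errors $O(\eta)+O(N^{-1}\log\eta^{-1})$. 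The key structural fact is that the total quasi-hole charge $\sum_j q_j = O(1)$ and the $a_j$'s lie in a fixed ball, so the external part of $\mathcal H_N$ is a bounded perturbation of $|x|^2$.

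The lower bound is the main technical obstacle: expressing $\mathcal H_N$ through the empirical measure $\nu_N := N^{-1}\sum_i \delta_{x_i}$, one extracts $N\,\cEel_{f_\delta}[\nu_N]$ up to a diagonal self-interaction that must be removed by a Coulomb charge-smearing (point-splitting) argument of the type used in~\cite[Section~3]{RouSerYng-13b}. Balancing the smearing scale against the entropy cost at $\eta \simeq N^{-1/2}$ produces the advertised $O((\log N)/N)$ deficit. Given the matching, I would invoke the stability inequality (proved exactly as~\eqref{eq:stability})
\begin{equation*}
\cEel_{f_\delta}[\rho] \geq \cEel_{f_\delta}[\rhoel_{f_\delta}] + 2\ell\, D(\rho - \rhoel_{f_\delta},\, \rho - \rhoel_{f_\delta})
\end{equation*}
at $\rho = \mu_{f_\delta}^{(1)}$, together with a Hewitt--Savage-type symmetrization relating $\int \mathcal H_N\, d\mu_{f_\delta}^{(N)}$ to $N\cEel_{f_\delta}[\mu_{f_\delta}^{(1)}]$ up to controllable two-body corrections, to conclude $D\bigl(\mu_{f_\delta}^{(1)} - \rhoel_{f_\delta},\, \mu_{f_\delta}^{(1)} - \rhoel_{f_\delta}\bigr) \leq C(\log N)/N$. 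The Cauchy--Schwarz bound~\eqref{eq:CS} then immediately delivers the first term in~\eqref{eq:MF final}; the second, $\|\nabla\chi\|_{L^\infty}$-term absorbs the smoothing discrepancy $|\int \chi(\rhoel_{f_\delta} - \tilde\rho)|\leq C\eta\,\|\nabla\chi\|_{L^\infty}$ at the scale $\eta\simeq N^{-1/2}$.

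For the pointwise decay~\eqref{eq:mu-decay} I would argue directly from the wave function, independently of the mean-field analysis. The polynomial $f_\delta$ has degree $M = O(N)$ with roots in a disk of radius $O(\sqrt N)$, so $|f_\delta(z)|^2 \leq (|z|+C\sqrt N)^{O(N)}$ pointwise. Writing
\begin{equation*}
|\Psi_{f_\delta}(z,z_2,\ldots,z_N)|^2 = |c|^2 |f_\delta(z)|^2 e^{-|z|^2}\prod_{j\geq 2}|z-z_j|^{2\ell} \cdot G(z_2,\ldots,z_N),
\end{equation*}
with $G\geq 0$ independent of $z$, and using $\int |z-w|^{2\ell}e^{-|w|^2}dw \leq C(1+|z|)^{2\ell}$ to absorb each of the $N-1$ pair factors against the Gaussian weights hidden in $G$, together with $\int|\Psi_{f_\delta}|^2 = 1$ to control the remaining constant, yields $\rho_{f_\delta}(z) \leq C\,e^{-|z|^2}(1+|z|+\sqrt N)^{O(N)}$. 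Rescaling $x = z/\sqrt N$ converts the polynomial growth into $\exp\!\bigl(O(N)\log(1+\sqrt N(1+|x|))\bigr)$, which combined with $e^{-N|x|^2}$ gives precisely $\mu_{f_\delta}^{(1)}(x) \leq C\exp(-NC(|x|^2 - \log N))$.
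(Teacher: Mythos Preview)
Your overall strategy for~\eqref{eq:MF final} follows the template of~\cite{RouSerYng-13b}, but the step passing from the free-energy matching to control on $D(\mu_f^{(1)}-\rhoel_f,\mu_f^{(1)}-\rhoel_f)$ does not work as written. You propose to apply the stability inequality at $\rho=\mu_f^{(1)}$ after a ``Hewitt--Savage-type symmetrization relating $\int\mathcal H_N\,d\mu_f^{(N)}$ to $N\cEel_f[\mu_f^{(1)}]$''. But the pair-interaction part of $\int\mathcal H_N\,d\mu_f^{(N)}$ involves the two-particle marginal $\mu_f^{(2)}$, and for a repulsive Coulomb gas one expects $\iint\log|x-y|^{-1}\,d\mu_f^{(2)}\leq 2D(\mu_f^{(1)},\mu_f^{(1)})$ (the correlations push particles apart), so the inequality goes the wrong way for an upper bound on $\cEel_f[\mu_f^{(1)}]$; de~Finetti does not reverse this. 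The correct route---which you in fact set up a few lines earlier---is to apply stability at the level of the smeared empirical measure, obtaining a \emph{pointwise} lower bound $\mathcal H_N\geq N\Eel_f+2\ell N\,D(\cdot,\cdot)-C\log N$, then integrate against $\mu_f^{(N)}$ and use convexity of $\sigma\mapsto D(\sigma,\sigma)$ to descend to the one-particle marginal. The paper takes a somewhat different two-step path: it first invokes~\cite[Theorem~3.2]{RouSerYng-13b} verbatim to compare $\mu_f^{(1)}$ with the minimizer $\rhoMF_f$ of the \emph{full} free-energy functional $\cEMF_f$ (entropy included), and then separately proves $D(\rhoMF_f-\rhoel_f,\rhoMF_f-\rhoel_f)\leq CN^{-1}$ via an auxiliary functional $\cEelt_f=\cEel_f-N^{-1}\int|x|^2\rho$, positivity of relative entropy against a Gaussian reference, and a virial argument bounding $\int|x|^2\rhoelt_f$. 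Routing through $\rhoMF_f$ is precisely what lets the entropy term be handled cleanly.

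Your direct argument for the decay~\eqref{eq:mu-decay} also has a gap. You cannot ``absorb each of the $N-1$ pair factors $|z-z_j|^{2\ell}$ against the Gaussian weights'' by the formula $\int|z-w|^{2\ell}e^{-|w|^2}dw\leq C(1+|z|)^{2\ell}$, because the remaining integrand in $z_j$ still carries $|f_\delta(z_j)|^2$ and all the cross factors $|z_i-z_j|^{2\ell}$ from $G$, so the $z_j$-integral is not the one you write. And ``$\int|\Psi_{f_\delta}|^2=1$ to control the remaining constant'' hides exactly the hard part: a comparison between the full $N$-particle partition function and a modified $(N{-}1)$-particle one. The paper instead obtains the decay from the explicit formula $\rhoMF_f(x)=\exp\bigl[-N\bigl(|x|^2+\Phi_{\rm qh}(x)+2\ell\,\Phi_{\rhoMF_f}(x)-C^{\rm MF}\bigr)\bigr]$, valid everywhere because $\rhoMF_f>0$, and then transfers the resulting bound $\rhoMF_f(x)\leq Ce^{-NC|x|^2}$ to $\mu_f^{(1)}$ via the argument of~\cite[Section~3.3]{RouSerYng-13b}---which is \emph{not} independent of the mean-field analysis, contrary to what you suggest.
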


\begin{proof}[Proof of Proposition~\ref{pro:MF plasma}]
In an intermediate step we rely on the mean-field \emph{free}-energy functional~\eqref{eq:mffull}, denoted by $\cEMF_f$ where $f$ is the $f_\delta$ constructed in Proposition~\ref{pro:inverse el}.  We denote by $\rhoMF_f$ the (unique) minimizer of $\cEMF_f$ amongst probability measures, and  the associated minimal free-energy by $\EMF_f$. Recall also that $\rho^{\rm el}_f=\rho_\delta$ is, by definition, the minimizer of \eqref{eq:mffuncf}, i.e., the electrostatic part of the mean field functional.

\medskip 

\noindent\textbf{Step 1.} The proof of~\cite[Theorem~3.2]{RouSerYng-13b} for the special case $f(z)=z^m$, $m\ll N^2$, carries over {\em mutatis mutandis} to the present situation, and yields 
\begin{equation}\label{eq:MFapprox}
\left|\intR \left(\mu^{(1)}_{f} (x)-\rhoMF_f (x)\right) \chi(x) dx\right | \leq C\left(\frac {\log N} N\right)^{1/2}\Vert \nabla \chi\Vert_{L^2} + C N^{-1/2} \Vert \nabla \chi\Vert_{L ^\infty}. 
\end{equation}
Thus, we only need to estimate the difference between the free-energy minimizer $\rhoMF_f$ and the 
\lq\lq electrostatic\rq\rq\ energy minimizer $\rho_f^{\rm el}$. 

\medskip 

\noindent\textbf{Step 2.} We claim that
\begin{equation}\label{eq:Dconv3}
D(\rhoMF_f - \rho^{\rm el}_f, \rhoMF_f - \rho^{\rm el}_f)\leq C N^{-1}.
\end{equation}
For the proof, we define  
\begin{equation}\label{eq:etildefunc} 
\cEelt_{f}[\rho] = \cEel_{f} [\rho]- N^{-1}\intR |x|^2\rho(x) dx
\end{equation}
with minimizer $\rhoelt_f$ and minimal value $\Eelt_f$. We have the string of inequalities 
\begin{align}\label{eq:string}
\Eel_f - N^{-1}\log(\pi\ell) &= \Eel_{f} + N^{-1} \intR \rhoel_{f}\log  \rhoel_{f} \nonumber \\ 
&\geq \EMF_f = \cEelt_f [\rhoMF_f] + N ^{-1} \intR \rhoMF_f \log \frac{\rhoMF_f}{\pi ^{-1} e^{-|x| ^2 }} - N ^{-1} \log \pi \nonumber\\
&\geq \Eelt_f + D \left(\rhoMF_f - \rhoelt_f, \rhoMF_f - \rhoelt_f \right) - N ^{-1} \log \pi \nonumber \\
&= \cEel_f [\rhoelt_f] - N^{-1} \intR |x|^2\rhoelt_f (x) dx \nonumber\\
&+ D \left(\rhoMF_f - \rhoelt_f, \rhoMF_f - \rhoelt_f \right) - N ^{-1} \log \pi \nonumber\\
&\geq \Eel_f + D \left(\rhoelt_f - \rhoel_f, \rhoelt_f - \rhoel_f \right) + D \left(\rhoMF_f - \rhoelt_f, \rhoMF_f - \rhoelt_f \right) \nonumber\\
&- N ^{-1} \log \pi - N^{-1} \intR |x|^2\rhoelt_f (x) dx.
\end{align}
In the first line we just use that $\rhoelt_f$ is constant on its support, and the first inequality is the variational principle. The last two inequalities follow from positivity of the relative entropy and~\eqref{eq:stability} applied to the functional~\eqref{eq:etildefunc} and then to $\cEel$. All in all, using the triangle inequality for the Coulomb metric \eqref{eq:metric}, we are left with estimating the last term of~\eqref{eq:string}, which is done by the following virial-type argument:

Consider the scaled densities 
\begin{equation}\label{eq:virial}
\rhoelt_{f,t} (x):=t^2\rhoelt_f(tx)
\end{equation}
with $t\geq 0$ and use that, because $\rhoelt_{f,1} = \rhoelt_f$ is a minimizer,
\begin{equation}\label{eq:virial 2}
 \frac{d}{dt} \, \Eelt_f \left[\rhoelt_{f,t}\right] _{|{t=1}}\geq 0.
\end{equation}
A calculation allows to deduce
\begin{equation}
 -2(1-N^{-1})\intR |x|^2 \rhoelt_f(x)dx + \iint_{\R^2 \times \R ^2} Q_{\delta}(y) \frac{|x|^2-\half {\rm Re}\, \bar x y}{|x-y|^2}\,\rhoelt_f(x)\, dx \,dy + 2\ell \geq 0
\end{equation}
which, because $\tilde \rho\in L^1\cap L^\infty$, implies (by considerations similar to~\eqref{eq:bound pot}) 
\beq\intR |x|^2\rhoelt_f(x)dx \leq C_1\intR Q_{{\rm qh},f} + C_2 \leq C R^2,\eeq
where $R$ is the radius of the enclosing disk in Lemma~\ref{lem:smeared}. This concludes the proof of~\eqref{eq:Dconv3}, for the disc is fixed. 

\medskip 

\noindent\textbf{Step 3.} To complete the proof of~\eqref{eq:MF final} we simply write 
\begin{multline}
 \intR \left(\mu^{(1)}_{f} (x)-\rhoel_f (x)\right) \chi(x) dx = \intR \left(\mu^{(1)}_{f} (x)-\rhoMF_f (x)\right) \chi(x) dx \\
 + \intR \left(\rhoMF_f (x)-\rhoel_f (x)\right) \chi(x) dx. 
\end{multline}
To estimate the first term we use~\eqref{eq:MFapprox}. For the second one we combine~\eqref{eq:Dconv3} and~\eqref{eq:CS}.

\medskip 

\noindent\textbf{Step 4.} Finally, we turn to the decay estimate~\eqref{eq:mu-decay}. As in~\cite{RouSerYng-13b}, it is sufficient to provide a decay estimate on the free-energy minimizer $\rhoMF_f$. The variational equation for the latter reads
\begin{equation}\label{eq:var T}
\Phi_{\rm qh}(x)+|x|^2+2\ell\,\Phi_{\rhoMF_f} (x) + N^{-1} \log\rhoMF_f (x) =  C^{\rm MF}
\end{equation}
with 
$$ C^{\rm MF} = \EMF_f + \ell D (\rhoMF_f,\rhoMF_f). $$
Thus,
\begin{equation}\label{MFformula}
\rhoMF_f (x)=\exp\left[-N\left(|x|^2+\Phi_{\rm qh}(x)+2\ell\,\Phi_{\rhoMF_f}(x)-C^{\rm MF} \right)\right].
\end{equation}
It is easy to see that $C^{\rm MF}$ is bounded independently of $N$. On the other hand, outside of $D(0,R+1)$ the potential $\Phi_{\rm qh}$ is bounded as 
$$ \left| \Phi_{\rm qh} (x) \right| \leq C \left( 1 + |\log|x|| \right). $$
Moreover, $\rhoMF_f$ is integrable\footnote{This is part of the definition of the variational set for the free energy functional, ensuring that $D(\rho,\rho)$ is well defined.} against the measure $(1 + |\log|x||)dx$ and
$$\Vert\rhoMF_f \Vert_{L ^\infty} \leq (\pi\ell)^{-1},$$ 
see~\cite[Equation~(3.19)]{RouSerYng-13b}. 
Hence $\Phi_{\rho^{\rm MF}}$ behaves as ${\rm (const.)}\log|x|$ for large $|x|$, see~\cite[Lemma~3.5]{LieRouYng-17}. Thus the term $|x|^2$ dominates the exponent in \eqref{MFformula} for large $|x|$ and we obtain
\begin{equation}\label{eq:MFdecay}
0\leq \rho^{\rm MF}(x)\leq Ce^{-NC|x|^2},
\end{equation}
This decay estimate carries over to the one-particle probability density $\mu^{(1)}_{f}$ in exactly the same way as in the proof of~\cite[Equation~(3.16)]{RouSerYng-13b}, see the end of Section 3.3 in that reference. 
\end{proof}

\subsection{Proofs of the main theorems}

We can now finish the

\begin{proof}[Proof of Theorem~\ref{thm:upbound}]
With our previous assumptions and notation, taking $f=f_\delta$ as defined in Subsection~\ref{sec:inv prob}, we have 
\begin{align}\label{eq:final split}
 \intR V (z) \rho_f (z) dz &= N \intR U(x) \mu_f ^{(1)} (x) dx  = N \intR \chiin (x) U(x) \rho_0 (x) dx \nonumber \\
 &+ N \intR \chiin (x) U(x) \left( \rho_0 (x) - \mu_f ^{(1)} (x) \right)dx + N \intR \chiout (x) U(x) \mu_f ^{(1)} (x) dx
\end{align}
where $\chiin$ and $\chiout$ are a smooth partition of unity with $\chiin$ supported in the disk $D(0, 2 \log N)$ and $\chiout$ zero in $D(0,\log N)$. Obviously we can impose 
\begin{equation}\label{eq:bound chi}
 \norm{\nabla \chiin }_{L^{\infty}} + \norm{\nabla \chiout }_{L^{\infty}} \leq C \log N. 
\end{equation}
Since $U$ is fixed and increases at infinity, the minimizer $\rho_0$ of the associated bathtub problem has compact support and thus the first term in the right-hand side of~\eqref{eq:final split} is, for large enough $N$, equal to $\ebt_V (N,\ell)$, which is proportional to $N$ by scaling.

We are thus left with estimating the two error terms. Using~\eqref{eq:growth pot},~\eqref{eq:CS},~\eqref{eq:inverse el} and \eqref{eq:MF final}, we obtain 
\begin{equation}\label{eq:4.15}
\left| \intR \chiin (x) U(x) \left( \rho_0 (x) - \mu_f ^{(1)} (x) \right)dx \right| \leq C N^{-1/4}. 
\end{equation}
On the other hand, using~\eqref{eq:mu-decay} and~\eqref{eq:growth pot} again, 
\begin{equation}
 \left| \intR \chiout (x) U(x) \mu_f ^{(1)} (x) dx\right| \leq \int_{|x|\geq \log N} \exp \left(- CN (|x| ^2 - \log N) \right) |x| ^s
\end{equation}
is clearly exponentially small in the limit $N\to \infty$. The proof is complete.
\end{proof}

Finally, let us give the 
\begin{proof}[Proof of Corollary~\ref{cor:density}]
From~\eqref{eq:almost min} and the fact that $U$ grows at infinity, it follows that the sequence 
\beq \mu_F ^{(1)} = \rho_F (\sqrt{N}\,)\eeq
is tight (precompact in the topology of weak convergence), for otherwise the energy would for large $N$ become much larger than the bathtub energy in  contradiction to Theorem~\ref{thm:upbound}. Thus $\mu_F ^{(1)}$ converges as a probability measure, along a subsequence, to a limiting $\mu_\infty$. We claim that 
\beq \mu_\infty = \rhobt_U,\eeq
which is the desired result (by uniqueness of the limit, the whole sequence then converges).

To prove the assertion we use a Feynman-Hellmann-type argument. Let $\chi$ be a smooth compactly supported test function and $\eps>0$ a small, fixed for now, real number. By assumption we have 
\beq 
\intR \mu_F ^{(1)} ( x ) \left( U (x) + \eps \chi (x) \right) dx = N ^{-1} E_V (N,\ell) + \eps \intR \mu_F ^{(1)} ( x ) \chi (x) dx + o_N (1)
\eeq
and thus Theorem~\ref{thm:main} yields 
\beq 
\intR \mu_F ^{(1)} ( x ) \left( U (x) + \eps \chi (x) \right) dx = \ebt_U (\ell) + \eps \intR \mu_F ^{(1)} ( x ) \chi (x) dx + o_N (1)
\eeq
But, using the energy lower bound from~\cite[Corollary~2.3]{LieRouYng-17}, we obtain  
\begin{align}
 \intR \mu_F ^{(1)} ( x ) \left( U (x) + \eps \chi (x) \right) dx &\geq \ebt_{U + \eps \chi} (\ell)+ o_N (1) \nonumber \\
 &\geq \ebt_U (\ell) + \eps \intR \chi \rhobt_{U+\eps \chi} + o_N (1).
\end{align}
Combining the previous estimates, letting $N\to \infty$ (along the subsequence previously identified) and then dividing by $\eps>0$ we obtain 
\beq
\intR \chi \mu_\infty \geq \intR \chi \rhobt_{U+\eps \chi}.
\eeq
Letting then $\eps \to 0$ we deduce (using the explicit formula for bathtub minimizers~\cite[Theorem~1.14]{LieLos-01})  
\beq 
\intR \chi \mu_\infty  \geq \intR \chi \rhobt_{U}.
\eeq
Repeating the previous steps with now $\eps < 0$ gives the reversed inequality and concludes the proof.
\end{proof}

\section{Refinements}\label{sec:refine}

The results in the previous sections were proved under the assumption that the scaled potential $U$ is fixed, i.e., independent of $N$. This means in particular that our analysis applies to potentials $V$ in~\eqref{eq:mag hamil} that vary locally over an arbitrary small, but {\em fixed},  fraction of the macroscopic sample scale $O(\sqrt N)$. On the other hand, to capture fully the effects of disorder caused by small, random impurities, it is desirable to consider also variations of $V$ on mesoscopic  scales $O(N^\alpha)$ with $0<\alpha<{1/2}$ ($\alpha = 0$ corresponds to the mean inter-particles distance). 

In terms of the scaled potential $U(x)=V(\sqrt N x)$ this amounts to allowing $N$-dependent $U_N$ such that $|\nabla U_N|$ is only required to be bounded locally by $N^{1/2-\alpha}$. It is still natural to require global confinement of the system on the scale $\propto \sqrt N$, and these two ideas can be formalized by writing
\begin{equation} 
U_N(x)=U_{{\rm dis},N}(x)+U_{\rm trap}(x)
\end{equation}
where the $N$-independent $U_{\rm trap}$ satisfies Assumption \ref{asum:pot}, in particular growth at infinity, while $U_{{\rm dis},N}$ is differentiable with bounded support, uniformly bounded in $N$ and satisfies
\beq \label{eq:reg cond}\Vert \nabla U_{{\rm dis},N}\Vert_\infty\leq C N^{1/2-\alpha}.\eeq
With such an assumption the explicit error bound in \eqref{eq:ener up bound} will not hold in general, of course,  but  \eqref{eq:MFapprox} with $\chi=U_{{\rm dis},N}$ would still tend to zero. In order to show that \eqref{eq:4.15} also tends to zero, which is required for Theorem~\ref{thm:upbound} even without an explicit error estimate, one has to take a closer look at the bound \eqref{eq:inverse el} on the difference between the bathtub minimizer $\rho_0=\rho^{\rm bt}_U$ and the mean-field minimizer $\rho_\delta=\rho^{\rm el}_{f_\delta}$.

With an $N$-dependent $U$ the support $\Omega_{0,N}$ of $\rho_0$ depends on $N$. Its area is still $N$-independent, $|\Omega_{0,N}|=\pi\ell$, and $\Omega_{0,N}$ stays within a fixed, $N$-independent disk $D(0,R)$ by our assumptions on $U_N$. Thus also the area of $\Omega_{0,N}'=D(R,0)\setminus \Omega_{0,N}$ does not depend on~$N$.

A problem might occur, however, if we define the discrete approximation $Q_\delta$ of the continuous distribution $Q_0$ as we did before, i.e, by tiling 
$\Omega_{0,N}'$ with squares of side length $\delta=\delta_N=(\pi/N)^{1/2}$. We recall that this  length was the smallest value of $\delta$ compatible with the fact that every exponent in \eqref{eq:qhfactor} has to be a positive integer. The estimate \eqref{eq:riemann} which is the basis for Proposition \ref{pro:inverse el}, has two error terms. The first one is proportional to $|\Omega_{0,N}'|$ and thus $N$-independent. The second one includes an estimate of the area of that part of $\Omega_{0,N}'$ which is not covered by the tiling with squares. This area is estimated from above by $\delta_N\cdot |\partial \Omega_{0,N}'|$. Now, while $|\Omega_{0,N}'|$ is independent of $N$, the same need not hold for the length of the boundary, $|\partial \Omega_{0,N}'|$. Since $R$ and hence the outer boundary $\partial D(R,0)$ of $\Omega_{0,N}'$ is fixed, this concerns only the inner boundary, which is $\partial \Omega_0$. Provided $|\partial \Omega_{0,N}|\ll N^{1/2}$ however, the right-hand side of \eqref{eq:inverse el} is at least $o(1)$ and this is sufficient for a proof of Theorem \ref{thm:upbound} with the explicit error term replaced by $o(1)$. 

The bound 
\begin{equation}\label{eq:boundary length}
|\partial \Omega_{0,N}|\ll N^{1/2} 
\end{equation}
is a mild regularity condition on the sequence of potentials $U_N$, certainly compatible with~\eqref{eq:reg cond} although it does not follow from it. Adding it to our assumptions on $U_N$ we have the following corollary of the considerations in Sections~\ref{sec:electrostatic} and~\ref{sec:conc proof}:

\begin{corollary}[{\bf Generalization of Theorem \ref{thm:upbound}}]
\label{thm:upbound gen}\mbox{}\\
Under the assumptions above on $U_N$ and with $V(z)=U_N(z/\sqrt N)$ there exists a (sequence of) polynomial(s) $f(z)$ such that, denoting $\rho_f$ the one-particle density of the corresponding state~\eqref{eq:fullcorr qh} we have
\begin{equation}\label{eq:ener up bound gen}
e_V (N,\ell) \leq  \int_{\R^2} V (z) \rho_f (z) dz \leq \ebt_V (N,\ell) \left( 1+o(1)) \right). 
\end{equation}
in the limit $N\to \infty$.
\end{corollary}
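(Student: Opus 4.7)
The plan is to rerun the proof of Theorem~\ref{thm:upbound} verbatim, with the same trial polynomial $f=f_\delta$ and lattice spacing $\delta_N=\sqrt{\pi/N}$ constructed in Section~\ref{sec:inv prob} (now applied to the possibly $N$-dependent set $\Omega_{0,N}$), and to check that each quantitative error estimate still yields $o(1)$ in the scaled variables once one is careful about the $N$-dependence of $U_N=U_{\rm trap}+U_{{\rm dis},N}$ and of $\partial\Omega_{0,N}$.

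First I would revisit the inverse-electrostatic step, Proposition~\ref{pro:inverse el}. The only places in its proof that see the geometry of $\Omega_{0,N}$ are the two error terms in the Riemann-sum estimate~\eqref{eq:riemann}. The Lipschitz and $L^\infty$ bounds~\eqref{eq:potbound}--\eqref{eq:potbound bis} on the potentials $\Phi_{\rho_0}$ and $\Phi_{\rho_\delta}$ depend only on the fixed constants in~\eqref{eq:densbounds}, namely $\ell$ and the enclosing radius $R$, so they stay uniform in $N$. Hence the first error in~\eqref{eq:riemann} is $O(N^{-1/2}|\Omega_{0,N}'|)=O(N^{-1/2})$ and the second is $O(N^{-1/2}|\partial\Omega_{0,N}'|)=o(1)$ under~\eqref{eq:boundary length}. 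Propagating these through~\eqref{eq:approx1 bis}--\eqref{eq:stability} yields
$$D(\rho_0-\rho_\delta,\rho_0-\rho_\delta)=o(1),$$
hence $d(\rho_0,\rho_\delta)=o(1)$ in the Coulomb metric~\eqref{eq:metric}.

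Second, Proposition~\ref{pro:MF plasma} involves only the quasi-hole factor $f=f_\delta$ and the plasma Hamiltonian: it does not use any property of the external potential, so its conclusions~\eqref{eq:MF final} and~\eqref{eq:mu-decay} carry over unchanged. Mimicking~\eqref{eq:final split} in the scaled variable I then split $\int U_N\mu_f^{(1)}=\int \chiin U_N\rho_0+\int \chiin U_N(\mu_f^{(1)}-\rho_0)+\int \chiout U_N\mu_f^{(1)}$. The first piece equals $N^{-1}\ebt_V(N,\ell)$ once $N$ is large enough that $\rho_0$ is contained in the region where $\chiin\equiv 1$. The $\chiout$-tail is exponentially small via~\eqref{eq:mu-decay}, together with the polynomial growth of $U_{\rm trap}$ and the uniform boundedness of $U_{{\rm dis},N}$. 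For the main error I insert $\pm\rho_\delta$ and combine~\eqref{eq:CS} with Step~1 and with~\eqref{eq:MF final}; the resulting bound involves the norms $\Vert\nabla(\chiin U_N)\Vert_{L^2}$ and $\Vert\nabla(\chiin U_N)\Vert_{L^\infty}$, which are controlled on the bounded support of $U_{{\rm dis},N}$ by~\eqref{eq:reg cond} (note $\chiin\equiv 1$ there for $N$ large, so the $\nabla\chiin$ factor drops out), and on the rest of the support of $\chiin$ by the polynomial growth of $U_{\rm trap}$ and~\eqref{eq:bound chi}.

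The hard part is the quantitative balance between these norms, which can be as large as $CN^{1/2-\alpha}$ times a power of $\log N$, and the now merely qualitative $o(1)$ smallness produced by Steps~1 and~2. This is precisely why the explicit rate $N^{-1/4}$ of Theorem~\ref{thm:upbound} must be relaxed to $o(1)$ in~\eqref{eq:ener up bound gen}: a careful tracking of~\eqref{eq:boundary length} against~\eqref{eq:reg cond} shows that the product of $d(\rho_0,\rho_\delta)\le CN^{-1/4}(1+|\partial\Omega_{0,N}|)^{1/2}$ with $\Vert\nabla(\chiin U_N)\Vert_{L^2}$, and the analogous product arising from~\eqref{eq:MF final}, still tend to zero provided $|\partial\Omega_{0,N}|$ is sufficiently smaller than $N^{1/2}$. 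Once this balance is verified, summing the three pieces yields $\int U_N\mu_f^{(1)}=N^{-1}\ebt_V(N,\ell)+o(1)$, and multiplying by $N$ gives~\eqref{eq:ener up bound gen}.
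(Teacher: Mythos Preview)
Your proposal tracks the paper's own Section~\ref{sec:refine} discussion closely (the paper gives no separate formal proof of the corollary, only the explanatory text preceding it): rerun Proposition~\ref{pro:inverse el} while watching the boundary-length term in~\eqref{eq:riemann}, observe that Proposition~\ref{pro:MF plasma} is independent of the external potential, and repeat the splitting~\eqref{eq:final split}. Your Steps~1--2 and the treatment of the $\chiout$-tail are correct and match the paper.

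The gap is in your final paragraph, and it is a real one. The term that must be controlled is $\int U_{{\rm dis},N}(\rho_0-\rho_\delta)$, and through~\eqref{eq:CS} this is bounded by $d(\rho_0,\rho_\delta)\,\Vert\nabla U_{{\rm dis},N}\Vert_{L^2}$. Under~\eqref{eq:boundary length} your Step~1 yields only $d(\rho_0,\rho_\delta)=o(1)$, while~\eqref{eq:reg cond} with bounded support gives $\Vert\nabla U_{{\rm dis},N}\Vert_{L^2}=O(N^{1/2-\alpha})$; the product is $o(N^{1/2-\alpha})$, which does \emph{not} tend to zero for $0<\alpha<1/2$. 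Your hedge ``provided $|\partial\Omega_{0,N}|$ is sufficiently smaller than $N^{1/2}$'' does not save this: even $|\partial\Omega_{0,N}|=O(1)$ gives only $d(\rho_0,\rho_\delta)=O(N^{-1/4})$ and hence a product $O(N^{1/4-\alpha})$, still divergent for $\alpha\le 1/4$. The difficulty is confined to this single term---the $U_{\rm trap}$ contribution and everything involving $\mu_f^{(1)}-\rho_\delta$ go through exactly as you say, and the paper explicitly checks~\eqref{eq:MFapprox} with $\chi=U_{{\rm dis},N}$ for the same reason. Closing the argument requires either a quantitative hypothesis linking $|\partial\Omega_{0,N}|$ to $\alpha$ that is stronger than~\eqref{eq:boundary length}, or an estimate on $\int U_{{\rm dis},N}(\rho_0-\rho_\delta)$ that exploits the uniform $L^\infty$ bound on $U_{{\rm dis},N}$ (for instance via an $L^1$ bound on $\rho_0-\rho_\delta$, which does not follow from the Coulomb-metric estimate alone). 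The paper's own sketch is equally terse on this point.
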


\section{Conclusions} We have derived an optimal upper bound in the large $N$ limit for the potential energy of states in the Lowest Landau Level exhibiting the same correlations as the Laughlin state. This bound is obtained using trial states deforming the Laughlin state by means of uncorrelated quasi-particles. It matches exactly a previously derived lower bound and provides a mathematical proof of Laughlin's insight that uncorrelated quasi-holes suffice for describing the response of his function to external perturbations.
 
 The method relies on rigorous mean-field theory for the Coulomb Hamiltonian in Laughlin's plasma analogy.  The construction of adequate trial states is achieved by  solving an inverse electrostatic problem.

\appendix

\section{Alternative solutions}\label{sec:app}

The solution \eqref{eq:cont charge}
is not the only quasi-hole distribution leading  to the bathtub solution \eqref{eq:bathtubrho}
 as an exact minimizer of \eqref{eq:mffunc}.

\medskip 
 
Consider first the case of  radially symmetric $U$. 
The sublevel set $\Omega_0$ is a union of disjoint annuli centered at the origin, 
\beq \Omega_0=\bigcup_{k=1}^K\mathcal A_k,\eeq
with 
\begin{equation} 
\mathcal A_k=\{x:\, r_{k,<}\leq |x|\leq  r_{k,>}\}, |\mathcal A_k|=\pi(r_{k,>}^2-r_{k,<}^2), \quad \quad\sum_{k=1}^K|\mathcal A_k|=\pi\ell.
\end{equation}
The complement $\Omega_0'=D(0,R)\setminus \Omega_0$ of $\Omega_0$ consists of annuli $\mathcal B_k$ with 
\begin{equation}
 \mathcal A_{k-1}\subset \mathcal B_k \subset \mathcal A_{k},\quad k=1,\cdots,K 
\end{equation}
with the convention that $\mathcal A_{0}=\emptyset$ and $\mathcal B_{1}=\emptyset$ if $r_{1,<}=0$.

If we fill $\Omega_0'=\cup_k \mathcal B_k$ with a two-dimensional continuous distribution of quasi-hole charges with density $2/\pi$ we obtain the same solution as before, Eq.\ \eqref{eq:cont charge}. Other solutions are obtained, however by distributing the total quasi-hole charge in $\mathcal B_k$ uniformly over any sub-annulus of $\mathcal B_k$. The sub-annulus can even shrink to a circle, $\mathcal C_k$, in which case we have a one-dimensional distribution. If $r_{1,<}=0$, we can shrink $\mathcal C_1$ to a point and have a quasi hole at the origin. 

The approximation by discrete distributions can in all cases carried out as before by covering the annuli/circles by a discrete grid.

\medskip

Also for general $U$ alternatives to \eqref{eq:cont charge} can be obtained making use of Newton's theorem. Let $D_1,\dots,D_n$ be any finite collection of disjoint disks contained in $D(0,R)\setminus \Omega_0$ such that $(N/\pi)|D_i|$ is a positive integer. Then a new solution is obtained by replacing the continuous distribution \eqref{eq:cont charge} within each disk $D_j$ by a discrete charge $q_j=(2/\pi)|D_j|$ placed at the center of the disk. The variational equation~\eqref{eq:var} holds with this new quasi-hole distribution because outside of $D_j$ and in particular on $\Omega_0$  the potential generated by the central charge is the same as the one generated by a continuous distribution on the disk, while inside the disk it is not smaller than before.

Using the \lq\lq cheese theorem\rq\rq\ \cite{LieLeb-72, LieSei-09} one can choose the disks so that they cover the whole of $D(0,R)\setminus \Omega_0$ up to an arbitrary small remainder. This is another way to approximate \eqref{eq:cont charge} by a discrete distribution of quasi-hole charges.


\bibliographystyle{siam}

\end{document}